\documentclass[journal]{IEEEtran}
\usepackage{graphicx}
\usepackage{caption}
\usepackage{subcaption}
\usepackage{stmaryrd}
\usepackage{verbatim}
\usepackage{amssymb}
\usepackage{amsmath}
\usepackage{amsfonts}
\usepackage{mathrsfs}
\usepackage{mathabx}
\usepackage{wasysym}
\usepackage{amsthm}
\usepackage{cite}
\usepackage{fixltx2e}
\usepackage{float}
\usepackage{dblfloatfix}
\usepackage[ampersand]{easylist}
\usepackage[ruled,vlined]{algorithm2e}
\usepackage{afterpage}
\usepackage{multirow}
\usepackage{tabu}
\DeclareMathAlphabet{\mathpzc}{OT1}{pzc}{m}{it}
\newcommand\norm[1]{\left\lVert#1\right\rVert}
\newcommand\abs[1]{\left\lvert#1\right\rvert}
\newcommand\inner[1]{\left\langle#1\right\rangle}
\newtheorem{lem}{Lemma}

\begin{document}
\title{Nonlinear Interference Alignment in a One-dimensional Space}
\author{Mohaned Chraiti,~Ali Ghrayeb and Chadi Assi}
\maketitle
{\let\thefootnote\relax\footnotetext{M. Chraiti and C. Assi are with the CIISE Department, Concordia University, Montreal,
Canada (email:m\_chrait@encs.concordia.ca; assi@ciise.concordia.ca).
\par A. Ghrayeb is with the ECE Department, Texas A$\&$M University at Qatar, Doha, Qatar (e-mail: ali.ghrayeb@qatar.tamu.edu).
de Montr\'eal, Montreal, Canada (e-mail: j-f.frigon@polymtl.ca).}}
\begin{abstract}
Real interference alignment is efficient in breaking-up a one-dimensional space over time-invariant channels into fractional dimensions. As such, multiple symbols can be simultaneously transmitted with fractional degrees-of-freedom (DoF). Of particular interest is when the one dimensional space is partitioned into two fractional dimensions. In such scenario, the interfering signals are confined to one sub-space and the intended signal is confined to the other sub-space. Existing real interference alignment schemes achieve near-capacity performance at high signal-to-noise ratio (SNR) for time-invariant channels. However, such techniques yield poor achievable rate at finite SNR, which is of interest from a practical point of view. In this paper, we propose a radically novel nonlinear interference alignment technique, which we refer to as Interference Dissolution (ID). ID allows to break-up a one-dimensional space into two fractional dimensions while achieving near-capacity performance for the entire SNR range. This is achieved by aligning signals by signals, as opposed to aligning signals by the channel. We introduce ID by considering a time-invariant, point-to-point multiple-input single-output (MISO) channel. This channel has a one-dimensional space and offers one DoF. We show that, by breaking-up the one dimensional space into two sub-spaces, ID achieves a rate of two symbols per channel use while providing $\frac{1}{2}$ DoF for each symbol. We analyze the performance of the proposed ID scheme in terms of the achievable rate and the symbol error rate. We also propose a decoder and prove its optimality. In characterizing the achievable rate of ID for the entire SNR range, we prove that, assuming Gaussian signals, the sum achievable rate is at most one bit away from the capacity. We present numerical examples to validate the theoretical analysis. We also compare the performance of ID in terms of the achievable rate performance to that of existing schemes and demonstrate ID's superiority.

\end{abstract}
\begin{IEEEkeywords}
Degrees-of-freedom, interference alignment, interference dissolution, time-invariant channels.
\end{IEEEkeywords}
\section{Introduction}
\par Interference is an inherent phenomenon in wireless networks, resulting from concurrent transmissions of many signals over the same communication channel. Interference can be avoided by transmitting different independent signals over orthogonal channels in the time, frequency and/or space dimensions, but this leads to poor throughput as the number of users grows large. Efficient interference management is thus crucial for efficient use of resources and for achieving high throughput. Traditionally, interference is considered as either noise or decoded along with the intended signal \cite{interference2}. Interference may thus cause severe performance degradation especially when the interference and the intended signals are of similar power strength.

\par Much work has been done to deal with interference, including the so-called interference alignment \cite{alignment}, which has opened the possibility of providing significantly better performance than what has been traditionally thought possible. The basic idea of interference alignment is to confine multiple interfering signals to a specific sub-space, while saving the other sub-spaces for communicating the intended signals. Such approach makes the impact of interference less severe at the receiver and keeps intended signals completely interference-free. This technique is especially relevant for interference channels with more than two users. Indeed, for the case where only two users share the same medium, interference at each
receiver is generated by only one user and the maximum degrees-of-freedom (DoF) offered by the channel is equal to one \cite{IntChanWithinBit}\cite{InterChanKobay}. Simple transmission schemes such as alternating access to the medium in time, for example, between the two users achieves the maximum DoF. However, such simple scheme is not efficient when the number of users is large.

\par The potential of interference alignment was first highlighted by Cadambe and Jafar \cite{alignment}, by applying it to a fully connected single-antenna $K$-user interference channel. The authors showed that, when the channel is time-varying or frequency selective, a total of $\frac{K}{2}$ DoFs (i.e., full DoF) are achievable implying that each user achieves $\frac{1}{2}$ DoF. These results are encouraging in the sense that they imply that the single-antenna $K$-user Gaussian interference channel is not inherently interference limited when the channel is time varying or frequency selective. However, the above result requires an unbounded channel diversity and it is unclear what the implication is for
real systems with finite channel diversity \cite{IntAlignFeas}.
The feasibility of interference alignment for more practical cases such as the time-invariant multiple-input multiple-output (MIMO) interference channel was studied in \cite{IntAlignFeas} with the assumption that only finite space diversity is available (no unbounded time or frequency diversity). The authors showed that, for a $K$-user interference channel, the achievable DoF is at most two for an arbitrary number of users. That is, the achievable DoF does not scale with $K$, unlike the result reported in \cite{alignment} in which the total achievable DoF is $\frac{K}{2}$. The limitation of the result in \cite{IntAlignFeas} stems from the lack of unbounded time/or frequency diversity, which is not the case in \cite{alignment}. This suggests that interference management for time-invariant channels is challenging.



\par Until recently, most of the existing interference management techniques were based on the notion of keeping some sub-spaces for interference-free communication while using successive decoding. However, since it was believed that a single DoF cannot be shared among users, non-interference-free sub-spaces essentially achieve zero DoF for at least one of the communicating signals. That is, a signal achieves either zero or one DoF. This is equivalent to saying that over a one-dimensional space (one available DoF) at most one signal belonging to the communicating signals may achieve one DoF while the remaining signals achieve zero DoF, implying that a one-dimensional space cannot be broken into fractional dimensions. Such result, for several time-invariant interference channels, is far from achieving the best performance offered by such channels.

\par It has been commonly believed that time-invariant channels are restrictive, especially when the space is one-dimensional. This perception stems from the fact that such channels are unable to incorporate vectorial interference management and one DoF is not shareable. To this end, some efforts have shown that interference alignment in a one-dimensional space is possible when symbols belong to integer lattices (lattice constellations) \cite{Lattice}. In \cite{latice1} and \cite{latice2}, the interfering symbols from several users are aligned in such a way that they are confined to one lattice. However, these techniques essentially rely on a judicious choice of power allocation such that high or low signal to interference ratio (SIR) conditions must hold, while using successive decoding.

\par The possibility of breaking up a one-dimensional space and sharing one DoF among users over time-invariant channels was first presented in \cite{one_dimension1} and \cite{one_dimension2} without imposing any conditions on the interference power. It was shown that interference management is possible if the signals are transmitted over rationally independent channels when the transmitted signals belong to a discrete constellation. They borrowed the Diophantine approximation (from number theory) to prove the possibility of communicating signals with DoF fractions. Specifically, it was shown in \cite{one_dimension2} that, at each receiver, if the channel corresponding to the interferers are rational and the channel gains corresponding to the intended signal are irrational, $\frac{K}{2}$ DoFs are achievable. In \cite{interferencGeneral} and \cite{IntAlignRobus}, more general results about this technique, referred to as real interference alignment, were presented. Specifically, it was shown in \cite{interferencGeneral} that it is possible to jointly decode interfering symbols for most channel realizations and hence they proved that the interfering symbols are naturally aligned by the channel. However, the DoF results in \cite{interferencGeneral} were derived in the asymptotic sense, i.e., at high signal-to noise ratio (SNR), which naturally raises a concern about the performance of real interference alignment at finite SNR. In \cite{IntAlignRobus}, the authors considered a $K$-user interference channel, where all interference (cross) channel gains are integers, and derived the achievable rate at finite SNR as function of some parameters to optimize over.

\par The approach of breaking up a one-dimensional space into fractional dimensions has been shown to have great potential for several applications in terms of the asymptotic achievable DoF. In \cite{one_dimension1,one_dimension2,interferencGeneral,IntAlignRobus}, the authors considered time-invariant interference X channels. They used real interference alignment to partition the space into two fractional dimensions where symbols are linearly precoded such that all the interference symbols are confined to a fractional dimension while the intended symbol is confined to the remaining one. In \cite{intAlignWiret}, this approach was also applied in the context of secrecy communication over a $K$-user Gaussian wiretap channel in the presence of $N$ eavesdroppers. The authors showed that $\frac{1}{2}-\frac{1}{2K}$ secure DoFs are achievable for almost all transmitter-receiver pairs. However, the numerical results in \cite{IntAlignRobus} showed that real interference alignment provides poor performance in terms of the achievable rate at finite SNR. Indeed, \cite{IntAlignRobus} considered the case of two interfering symbols in one-dimension and real interference alignment was used to partition the one-dimensional space into two fractional dimensions. It was shown that, up to $SNR = 40$dB, the achievable rate per symbol did not exceed $50\%$ of half the channel capacity (see Fig. 3 in \cite{IntAlignRobus}). That is, the achievable rate of both symbols (i.e., sum rate) did not exceed $50\%$ of the channel capacity, which suggests that there is much room for improvement.

\par As discussed above, real interference alignment is believed to have resulted in a big leap towards the possibility of sharing a one-dimensional space over time-invariant channels. For instance, partitioning a one-dimensional space into two fractional dimensions has been shown to be very efficient for multi-user channels and the wiretap channel. As such, symbols are linearly precoded in such a way that all interfering symbols are confined to a fractional dimension and the intended symbol gets the remaining one. However, existing real interference alignment techniques provide poor performance at finite SNR (see \cite{IntAlignRobus}). Motivated by these works, we introduce in this paper a novel approach, which we refer to as interference dissolution (ID). The proposed technique breaks up the one-dimensional space into two fractional dimensions and allows to transmit signals with fractional DoFs. The key idea is to align the signals by other signals using nonlinear precoding, rather than aligning them by the channel. To the best of our knowledge, ID is the first nonlinear interference management scheme and is the first that aligns signals by other signals. Moreover, it is the first that allows for partitioning the one-dimensional space while providing near-optimal performance for the entire SNR range in terms of the achievable rate. Therefore, we believe that ID is a radically novel way that offers the possibility of sharing a one-dimensional space by multiple signals.

\par To highlight the potential of the proposed ID, and for the sake of clarity, we adopt in this paper a generic channel model, which is a real time-invariant point-to-point MISO channel.\footnote{We remark that the proposed ID can be applied to other channels such as multi-user channels, the broadcast channel, the wiretap channel, among others. We show later in the paper how the proposed ID can be applied to a 3-user multicast channel. Other applications will be considered in future work.} The underlying channel offers a one-dimensional space and one DoF. We show how ID is able to break-up the space into two fractional dimensions and achieves fractional DoFs. This implies that the proposed ID allows for the transmission of two symbols per channel use, i.e., each symbol has $\frac{1}{2}$ DoF and thus both signals can be separated perfectly at the receiver. We stress that only real interference alignment developed for a one-dimensional space allows for simultaneous transmission of multiple symbols per channel use while providing non-zero fractional DoFs for interfering symbols \cite{one_dimension1,one_dimension2,interferencGeneral,IntAlignRobus} despite their poor performance at finite SNR, which is remedied by the proposed ID. Other interference management techniques such as successive decoding can only send symbols with either zero or one DoF, which may results in severe performance degradation, e.g., poor symbol error rate (SER).

\par In light of the above discussion, we summarize the contributions of the paper as follows:
\ListProperties(Hide=100, Hang=true, Progressive=3ex, Style*=-- ,
Style2*=$\bullet$ ,Style3*=$\circ$ ,Style4*=\tiny$\blacksquare$ )
\begin{easylist}
 & We present the proposed ID scheme including its precoding and decoding processes. We also derive the optimal decoder of the proposed scheme.
 & We show that ID allows for the transmission of $\frac{K}{\lceil\frac{K}{2}\rceil+1}\underset{K\to\infty}{\to}2$ (uncoded) symbols per channel use, where $K$ is the total number of transmitted signals. ID breaks up the one-dimensional space into two fractional spaces where each symbol achieves $\frac{\frac{K}{2}}{2(\lceil\frac{K}{2}\rceil+1)}\underset{K\to\infty}{\to}\frac{1}{2}$ DoF.
 & We characterize the achievable rate of ID at finite SNR considering Gaussian signals. We prove that the achievable rate associated with each symbol is within half a bit from half of the channel capacity. This proves that ID achieves near optimal performance at finite SNR, unlike existing real interference alignment techniques where they yield poor performance at finite SNR.
& We apply the proposed ID to a $3$-user multicast channel and show its advantages over existing techniques such as successive decoding.
& Simulation results are provided to demonstrate the efficacy of ID in separating interfering signals and thus managing interference in a one-dimensional space. Moreover, we show that ID outperforms its counterpart real interference alignment techniques \cite{one_dimension1,one_dimension2,interferencGeneral,IntAlignRobus} in terms of the achievable rate at finite SNR.
\end{easylist}
The rest of the paper is organized as follows. Section II introduces the proposed ID technique and the corresponding optimal decoder. In Section III, the achievable DoF is analyzed and the achievable rate is derived. In Section IV, we analyze the performance of ID considering a $3$-user multicast channel. Simulation results are presented in Section V. Concluding remarks are given in Section VI.
\par Throughout the rest of the paper, we use $\|\cdot\|$, $(\cdot)^{T}$ and $\langle\cdot,\cdot\rangle$ to denote the 2-norm, the transpose operators and the inner product between two vectors, respectively.
\section{Interference dissolution (ID)}\label{interference_dissolution}

\subsection{ID Precoding and Decoding}
The system model is point-to-point MISO and it is characterized by one DoF. We assume that the transmitter is equipped with $N$ antennas ($N\geq 2$) and the receiver is equipped with one antenna. The transmitter wishes to transmit $K$ independent symbols $(s_1,s_2,\ldots,s_{K})$ to the receiver. Symbols are assumed to be drawn from real-valued discrete constellations defined by $\mathcal{M}_s=A_s\mathbb{Z}^{*}_{Q_s}=\left\{-A_sQ_s,\ldots, A_s(Q_s-1), A_sQ_s\right\}$ where $A_s$ is a real constant controlling the minimum distance
between symbols \cite{PamAndQam}. Each transmitted symbol is subject to a power constraint:
$\sum_{l=-Q_s}^{Q_s}A_sl^2\leq P$.
We assume that the channel state information (CSI) is available at the transmitter and the receiver.
\par ID achieves a rate of two symbols per channel use by sending from the transmitter a combination of all the $K$ symbols during the first channel use (or time slot). Then, during subsequent channel uses, the transmitter nonlinearly precodes symbols and sends them in such a way that one pair of symbols becomes separable at the receiver. As such, the total required number of channel uses to decode the $K$ symbols is $\lceil \frac{K}{2}\rceil+1$. ID allows to transmit $\frac{K}{\lceil \frac{K}{2}\rceil+1}\underset{K\to \infty}{\rightarrow}{2}$ symbols per channel use. In this section, we describe in detail the precoding and decoding processes.

\par We note that $K$ and $N$ are independent. As such, there are two cases: $K\leq N$ and $K>N$. In the former case, the transmitter arbitrarily chooses $K$ antennas among the available $N$ antennas to transmit the $K$ symbols in the first channel use. The selection of the $K$ antennas does not affect the development of ID since our objective is find the achievable DoF for each symbol. When $K$ distinct antennas are selected, the received signal can then be written as
\begin{equation}\label{eqn3}
y_1=\sum_{k=1}^{K}h_ks_k+n_{1},
\end{equation}
where $\{h_1,h_2,\ldots,h_{K}\}$ are real channel gains, and $n_1$ is additive white Gaussian noise (AWGN) with zero mean and variance $\sigma^2$.

\par In the second case when $K>N$, the $K$ signals can be distributed arbitrarily over the available $N$ antennas without specific order. That is, it is possible for one antenna to transmit a combination of multiple independent symbols. The only restriction here is that one symbol cannot be transmitted more than once in the first channel use. In addition, at least two antennas should be used. We elaborate on this by considering the example when $N=2$. In this case, one way is to divide the $K$ symbols into two sets of symbols $\{s_1, s_2,\ldots,s_{\lfloor\frac{K}{2}\rfloor}\}$ and $\{s_{\lfloor\frac{K}{2}\rfloor+1},s_{\lfloor\frac{K}{2}\rfloor+2},\ldots, K\}$. The transmitter sends the sum of the symbols of the first set from the first antenna, and the sum of the symbols of the second set from the second antennas. That is,
\begin{equation}\label{eqn2}
y_1=\sum_{k=1}^{\lfloor\frac{K}{2}\rfloor}h_1s_k+\sum_{k=\lfloor\frac{K}{2}\rfloor+1}^{K}h_2s_k+n_{1}.
\end{equation}
Note that (\ref{eqn3}) can simplify to (\ref{eqn2}) when $h_1=h_2=\ldots=h_{\lfloor\frac{K}{2}\rfloor}$ and $h_{\lfloor\frac{K}{2}\rfloor+1}= h_{\lfloor\frac{K}{2}\rfloor+2}=\ldots=h_{K}$. Also the way the $K$ symbols are split over the two antennas is not unique. In fact, the choice of the signals sent by the same antenna does not affect the performance of the proposed technique in terms of the achievable DoF. Having said this, Equation (\ref{eqn3}) can still be applied to the case when $K>N$. Consequently, for convenience, we use (\ref{eqn3}) in the rest of the paper.

\subsection*{\underline{$(s_1,s_2)$ Precoding}}
Recall that the objective of ID is to partition the one-dimensional space into two fractional dimensions, one fractional dimension for each symbol. Consequently, ID allows to transmit two symbols per channel use, implying that precoding and decoding symbols is done pairwise. To illustrate the precoding and decoding processes, let us start by the first symbols pair $(s_1,s_2)$ (the same is done for the remaining symbol pairs as will be given later). Consequently, the remaining signals, i.e., $(s_3,s_4,\ldots,s_{K})$, are seen as interference.

\par ID precodes symbols such that the interfering signals are aligned by the signal vector $(h_2s_2,-h_1s_1)^T$ which is orthogonal to $\{h_1s_1,h_2s_2\}$. This will allow to decode $\{s_1,s_2\}$. To this end, in the second channel use, the transmitter will send a nonlinear combination of the symbols to align the remaining symbols by the symbol vector $(h_2s_2,-h_1s_1)^T$. We note here that during all channel uses, except for the first one, the transmitter communicates nonlinear combinations of symbols with the receiver. The received signal during the first channel use is a linear combination of symbols, as shown in (\ref{eqn3}), and it will be used in the decoding process of all symbol pairs as shown later. To align signals by other signals, which is the core idea of ID, we introduce the concept of signal dissolution whereby the interfering signals are forced to yield their transmit power to one of the intended signals. Considering the pair $\{s_1, s_2\}$, the transmitter dissolves the interfering signals into $s_2$ to get the following nonlinear signal form. $$h_1s_1+\beta_1 h_2s_2+n_1,$$ where $\beta_1$ is the dissolution factor. To this end, the transmitter reproduces the noiseless part of $y_1$ where it calculates the dissolution factor $\beta_1=1+\frac{\sum_{k=3}^{K}h_ks_k}{h_2s_2}$ by solving:\footnote{ In the unlikely event that $h_2$ (or any other channel gain) is close to zero, it means that the corresponding channel is in deep fade, and therefore the associated antenna may not be used, i.e., no transmission takes place from that antenna, which is what happens in practice. Moreover, the analysis in this paper pertaining to the DoF is asymptotic in the SNR, and thus there are no constraints on the transmitted power.}

\begin{equation}\label{eqn4}
h_1s_1+\beta_1 h_2s_2=h_1s_1+h_2s_2+\sum_{k=3}^{K}h_ks_k.
\end{equation}
In the second channel use, the transmitter sends a nonlinear combination of $s_1$, $s_2$ and $\beta_1$, namely, it sends $-\beta_1 s_1$ via $h_1$ and
 $s_2$ via $h_2$. As such, the received signals vector over the two first channel uses is written as
\begin{equation}\label{eqn7}
\begin{aligned}
\boldsymbol{y}=
\begin{pmatrix}
y_1 \\ y_2
\end{pmatrix}&=
\begin{pmatrix} h_1s_1 \\ h_2s_2
\end{pmatrix}+\beta_1
\begin{pmatrix}
h_2s_2 \\ -h_1s_1
\end{pmatrix}
+
\begin{pmatrix}
n_1 \\ n_2
\end{pmatrix}\\
&=\boldsymbol{v}\big(s_1,s_2\big)+\beta_1\boldsymbol{v}^{\perp}\big(s_1,s_2\big)+\boldsymbol{n}_1,
\end{aligned}
\end{equation}
where $n_2$ is AWGN with zero mean and variance $\sigma^2$. $\boldsymbol{n}_1$ is the AWGN vector $(n_1, n_2)^T$. $\boldsymbol{v}\big(s_1,\,s_2\big)$ and $\boldsymbol{v}^{\perp}\big(s_1,\,s_2\big)$ denote the vector $(h_1s_1,h_2s_2)^T$ and its orthogonal vector $(h_2s_2,-h_1s_1)^T$ respectively. One can easily see that the remaining signals are confined to the sub-space formed by the signal vector $\boldsymbol{v}^{\perp}\big(s_1,s_2\big)=(h_2s_2,-h_1s_1)^T$. Hence, the interfering signals are aligned by the intended ones. 

\begin{table*}[t]\label{table}
\begin{center}
\caption{Interference dissolution transmission process for $K$ (even) signals.}\label{table}
\begin{tabular}{|c|c|c|c|c|c|c|}
   \hline
    Channel use & $1st$ & $2nd$ & $3rd$ & $\cdots$ & $(K/2)\text{th}$ &$(K/2+1)\text{th}$ \\
 \hline
   Transmitted Signals & $\{s_1, s_2\ldots,s_K \}$ & $\{\beta_1s_1,s_2\}$ & $\{\beta_2s_3,s_4\}$ & $\cdots$ & $\{\beta_{K/2-1}s_{K-3},s_{K-2}\}$ &$\{\beta_{K/2}s_{K-1},s_{K}\}$  \\
    \hline
    Decoded signals & $\emptyset$& $\{s_1,s_2\}$ & $\{s_3,s_4\}$ & $\cdots$ & $\{s_{K-3},s_{K-2}\}$ &$\{s_{K-1},s_{K}\}$\\
       \hline
   Received signals used & $\emptyset$& $\{y_1,y_2\}$ & $\{y_1,y_3\}$ & $\cdots$ & $\{y_1,y_{k/2}\}$ &$\{y_1,y_{k/2+1}\}$\\
    \hline
\end{tabular}
\end{center}
\end{table*}
\subsection*{\underline{$(s_1,s_2)$ Decoding}}
The interfering symbols are aligned by the signals vector $(h_2s_2,-h_1s_1)^T$ as shown in (\ref{eqn7}). However, there are three unknowns, namely $s_1$, $s_2$ and $\beta_1$, while the receiver has only two signals combinations, which makes the decoding process of $s_1$, $s_2$ not trivial. Furthermore, the received signals vector is a nonlinear combination of the transmitted signals, and hence the decoder differs considerably from existing interference alignment techniques. As we show next, the proposed decoder allows to extract and without any knowledge about $\beta_1$.”
\par To decode $\{s_1,s_2\}$, the receiver starts by building the set of all possible pairs of signals $(h_1s_1,h_2s_2)$, namely, $$\mathcal{S}=\left\{\boldsymbol{v}\big(\widetilde{s}_1,\,\widetilde{s}_2\big)=\begin{pmatrix} h_1\widetilde{s}_1 \\ h_2\widetilde{s}_2 \end{pmatrix}:\, (\widetilde{s}_1,\, \widetilde{s}_2)\in \mathcal{M}_s^2\right\}.$$ Then, for each vector $\boldsymbol{v}\big(\widetilde{s}_1,\,\widetilde{s}_2\big)\in \mathcal{S}$, the decision weight component is expressed as

\small
\begin{equation}\label{eqn8}
\begin{aligned}
w(\widetilde{s}_1,\widetilde{s}_2)
&=\frac{\abs{\inner{\boldsymbol{y}-\boldsymbol{v}\big(\widetilde{s}_1,\widetilde{s}_2\big),\boldsymbol{v}\big(\widetilde{s}_1,\widetilde{s}_2\big)}}}{\norm{\boldsymbol{v}\big(\widetilde{s}_1,\widetilde{s}_2\big)}}
\\&=\frac{\abs{\inner{\boldsymbol{v}(s_1,s_2)+\beta_1\boldsymbol{v}^{\perp}(s_1,s_2)+\boldsymbol{n}-\boldsymbol{v}\big(\widetilde{s}_1,\widetilde{s}_2\big),\boldsymbol{v}\big(\widetilde{s}_1,\widetilde{s}_2\big)}}}{\norm{\boldsymbol{v}\big(\widetilde{s}_1,\widetilde{s}_2\big)}}.
\end{aligned}
\end{equation}
\normalsize
It is clear from (\ref{eqn8}) that the noiseless part of the weight component $w(\widetilde{s}_1,\widetilde{s}_2)$ is equal to zero when $(\widetilde{s}_1,\widetilde{s}_2)=(s_1,s_2)$.
Otherwise, it takes a non zero value for almost all channel realizations (see Lemma \ref{LEM1} in Subsection \ref{Dofs} and its proof in Appendix \ref{appendix}). The decision rule consists therefore of choosing the symbol vector $(\widehat{s}_1,\widehat{s}_2)$ that minimizes the weight component, i.e.,
\begin{equation}\label{eqn10}
(\widehat{s}_1,\widehat{s}_2)= \operatornamewithlimits{argmin}_{(\widetilde{s}_1,\widetilde{s}_2)\in \mathcal{M}_s^2} \, w(\widetilde{s}_1,\widetilde{s}_2)\,.
\end{equation}
The optimality of this decision rule is proved in Subsection \ref{DeciRule}. The proof that the symbols are perfectly separable using ID is provided in Section \ref{Dofs}, where we show that each symbol has $\frac{1}{2}$ DoF. We provide simulation results in Section \ref{simulation} to validate the separability of the symbols. It should be noted that the receiver is able to decode the information symbols $(s_1,s_2)$ without decoding the remaining ones and without any knowledge about them such as transmit power, modulation technique, etc., or their associated channel gains.

\subsection*{\underline{$(s_3,s_4,\ldots,s_K)$ Precoding and Decoding}}
After decoding $(s_1,s_2)$, the transmitter and the receiver consider $(s_3,s_4)$ as intended and the rest of the signals as interference. Precoding proceeds as described before, where the noiseless part of $y_1$ is used this time to dissolve $h_1s_1+h_2s_2+\sum_{k=5}^{K}h_ks_k$ in $h_3s_3$ by calculating the dissolution factor $$\beta_2=1+\frac{h_1s_1+h_2s_2+\sum_{k=5}^{K}h_ks_k}{h_3s_3}.$$ In the third channel use, the transmitter sends the nonlinearly precoded signals in order to align the interference by the intended ones. The receiver proceeds, as explained in previous part, to decode $(s_3,s_4)$ using the signals vector $(y_1,y_3)$.
 \begin{equation}\label{eqn12}
\begin{aligned}
\begin{pmatrix}
y_1 \\ y_3
\end{pmatrix}&=
\begin{pmatrix} h_3s_3 \\ h_4s_4
\end{pmatrix}+\beta_2
\begin{pmatrix}
h_4s_4 \\ -h_3s_3
\end{pmatrix}
+
\begin{pmatrix}
n_1 \\ n_3
\end{pmatrix}\\
&=\boldsymbol{v}\big(s_3,\,s_4\big)+\beta_1\boldsymbol{v}^{\perp}\big(s_3,\,s_4\big)+\boldsymbol{n_2},
\end{aligned}
\end{equation}
where $\boldsymbol{n_2}$ is the AWGN vector $(n_1,n_3)$.
The above processes of precoding and decoding can be applied to the rest of the symbol pairs. In the $\left(k+1\right)\text{th}$ ($k\in\{1,2,\ldots,\lceil\frac{K}{2}\rceil\}$) channel use, the receiver decodes the $(k)\text{th}$ signals pair.
\par After decoding the $\lceil\frac{K}{2}\rceil-1$ signals pairs, in $\lceil\frac{K}{2}\rceil$ channel uses, it still needs to decode two symbols when $K$ is even. If $K$ is odd, a symbol is chosen to form a pair with the $K\text{th}$ symbol. Otherwise, ID is applied for the $\left(\frac{K}{2}\right)\text{th}$ signals pair. The transmission process when $K$ is even is summarized in table \ref{table}. Interference dissolution allows for the transmission of $\frac{K}{\lceil\frac{K}{2}\rceil+1}\underset{K\to\infty}{\to} 2$ uncoded information symbols per channel use while keeping them separable at the receiver.

\subsection{Interference dissolution decision rule analysis}\label{DeciRule}
\par The symbols are assumed to be equally likely over the discrete constellation $\mathcal{M}_s$. Therefore, we use Maximum Likelihood (ML) \cite{DetecEstimTheory}, which is optimal in the sense of achieving the lowest SER. That is,
\begin{equation}\label{eqn15}
\begin{aligned}
 (\widehat{s}_1,\,\widehat{s}_2)&= \operatornamewithlimits{argmax}_{(\widetilde{s}_1,\widetilde{s}_2)\in \mathcal{M}^2_s } \, p\left(\boldsymbol{y}| \widetilde{s}_1,\widetilde{s}_2\right)\, \\
 &=\operatornamewithlimits{argmin}_{(\widetilde{s}_1,\widetilde{s}_2)\in \mathcal{M}^2_s} \, \sigma^2\left( \boldsymbol{y}-E[\boldsymbol{y}]\right)^T\boldsymbol{C}^{-1}_{\widetilde{s}_1,\widetilde{s}_2}\left( \boldsymbol{y}-E[\boldsymbol{y}]\right),
\end{aligned}
\end{equation}
 where $\boldsymbol{C}_{\widetilde{s}_1,\widetilde{s}_2}$ is the covariance matrix of the received signal $\boldsymbol{y}$ given that $(\widetilde{s}_1,\widetilde{s}_2)$ was transmitted. Next, we use $\eta^2$ to denote the variance of the variable $\beta_1$ given $(\widetilde{s}_1,\widetilde{s}_2)$ was transmitted, that is,

\begin{equation}\label{eqn16}
 \begin{aligned}
 \eta^2&=E[\beta_1^2]-(E[\beta_1])^2\\
 &=E\left[\left(1+\frac{\sum_{k=3}^{K}h_ks_k}{h_2\widetilde{s}_2}\right)^2\right]-\left(E\left[1+\frac{\sum_{k=3}^{K}h_ks_k}{h_2\widetilde{s}_2}\right]\right)^2\\
 &=P\frac{\sum_{k=3}^{K}(h_k)^2}{(h_2\widetilde{s}_2)^2}.
 \end{aligned}
 \end{equation}
 The inverse of the covariance matrix is written as
 \begin{equation}\label{eqn17}
\begin{aligned}
\boldsymbol{C}^{-1}_{\widetilde{s}_1,\widetilde{s}_2}&=\left(E\left[\left(\boldsymbol{y}-E[\boldsymbol{y}]\right)\left(\boldsymbol{y}-E[\boldsymbol{y}]\right)^T\right]\right)^{-1}
\\&=\left(\begin{matrix}{\eta^2} h_2^2\widetilde{s}^2_2+{\sigma}^2 & -{\eta^2}h_1h_2\widetilde{s}_1\widetilde{s}_2 \\ -{\eta^2}h_1h_2\widetilde{s}_1\widetilde{s}_2 & {\eta^2} h_1^2\widetilde{s}^2_1+{\sigma}^2\end{matrix}\right)^{-1}
\\&=\frac{\left(\begin{matrix}{\eta^2} h_1^2\widetilde{s}^2_1+{\sigma}^2 & {\eta^2}h_1h_2\widetilde{s}_1\widetilde{s}_2 \\ {\eta^2}h_1h_2\widetilde{s}_1\widetilde{s}_2 & {\eta^2} h_2^2\widetilde{s}^2_2+{\sigma}^2\end{matrix}\right)}{{\sigma}^2({\eta}^2\norm{\boldsymbol{v}(\widetilde{s}_1,\widetilde{s}_2)}^2+{\sigma}^2)}.
\end{aligned}
\end{equation}
Let $(d_1,d_2)^T$ denote the vector $\left(\boldsymbol{y}-\boldsymbol{v}(\widetilde{s}_1,\widetilde{s}_2)\right)$. In the following, we derive an explicit expression for the ML decision function in (\ref{eqn15}) to show that it is equal to the proposed weight component $w^2(\widetilde{s}_1,\widetilde{s}_2)$ and hence prove the optimality of the decoder. Considering the formula of the covariance matrix, the ML decision function can be written as

\small
\begin{equation}\label{eqn18}
\begin{aligned}
 & \frac{{\eta}^2\left\langle\boldsymbol{y}-\boldsymbol{v}(\widetilde{s}_1,\widetilde{s}_2),\boldsymbol{v}(\widetilde{s}_1,\widetilde{s}_2)\right\rangle^2+
 {\sigma}^2\|\boldsymbol{y}-\boldsymbol{v}(\widetilde{s}_1,\widetilde{s}_2)\|^2}{{{\eta}^2\norm{\boldsymbol{v}(\widetilde{s}_1,\widetilde{s}_2)}^2+{\sigma}^2}}\, \\
 &=  \frac{{\eta}^2(d_1 h_1\widetilde{s}_1+d_2h_2\widetilde{s}_2)^2+
 {\sigma}^2(d^2_1+d^2_2)}{{\eta}^2\norm{\boldsymbol{v}(\widetilde{s}_1,\widetilde{s}_2)}^2+{\sigma}^2}\,
 \\
 &=\frac{{\eta}^2\left(\norm{\boldsymbol{v}(\widetilde{s}_1,\widetilde{s}_2)}^2(d^2_1+d^2_2)-(d_1 h_2\widetilde{s}_2-d_2h_1\widetilde{s}_1)^2\right)+{\sigma}^2(d^2_1+d^2_2)}{{\eta}^2\norm{\boldsymbol{v}(\widetilde{s}_1,\widetilde{s}_2)}^2}\\
  &=\frac{(d^2_1+d^2_2)\left({\eta}^2\norm{\boldsymbol{v}(\widetilde{s}_1,\widetilde{s}_2)}^2+{\sigma}^2\right)-{\eta}^2(d_1 h_2\widetilde{s}_2-d_2h_1\widetilde{s}_1)^2}{{\eta}^2\norm{\boldsymbol{v}(\widetilde{s}_1,\widetilde{s}_2)}^2}\\
  &= d^2_1+d^2_2-\frac{{\eta}^2\left(d_1 h_2s_2-d_2h_1 s_1\right)^2
 }{{\eta}^2\norm{\boldsymbol{v}(\widetilde{s}_1,\widetilde{s}_2)}^2+{\sigma}^2}\,\\
  &\overset{a}{\simeq} d^2_1+d^2_2-\frac{{\eta}^2\left(d_1 h_2s_2-d_2h_1 s_1\right)^2
 }{{\eta}^2\norm{\boldsymbol{v}(\widetilde{s}_1,\widetilde{s}_2)}^2}\\
 &= \frac{(d^2_1+d^2_2)\norm{\boldsymbol{V}(\widetilde{s}_1,\widetilde{s}_2)}^2-\left(d_1 h_2s_2-d_2 h_1s_1\right)^2
 }{\norm{\boldsymbol{V}(\widetilde{s}_1,\widetilde{s}_2)}^2}\, \\
 &=\frac{\left(d_1h_1s_1+d_2h_2 s_2\right)^2
 }{\norm{\boldsymbol{v}(\widetilde{s}_1,\widetilde{s}_2)}^2}\, \\
 &=\frac{\left|\langle\boldsymbol{y}-\boldsymbol{v}(\widetilde{s}_1,\widetilde{s}_2),\boldsymbol{v}(\widetilde{s}_1,\widetilde{s}_2)\rangle\right|^2}{\norm{\boldsymbol{v}(\widetilde{s}_1,\widetilde{s}_2)}^2}.
\end{aligned}
\end{equation}
\normalsize
  The equality (a)  in (11) is valid for medium to high SNR where ${\eta}^2\boldsymbol{v}(\widetilde{s}_1,\widetilde{s}_2)^2+{\sigma}^2\simeq {\eta}^2\boldsymbol{v}(\widetilde{s}_1,\widetilde{s}_2)^2$ [\cite{Goldsmith}, 6.1.7]. The ML detector in (\ref{eqn15}) hence becomes

 \begin{equation}
 \begin{aligned}
(\widehat{s}_1,\,\widehat{s}_2)&=\operatornamewithlimits{argmin}_{(\widetilde{s}_1,\widetilde{s}_2)\in \mathcal{M}^2_s}\,\, \frac{\left|\langle\boldsymbol{y}-\boldsymbol{v}(\widetilde{s}_1,\widetilde{s}_2),\boldsymbol{v}(\widetilde{s}_1,\widetilde{s}_2)\rangle\right|^2}{\norm{\boldsymbol{v}(\widetilde{s}_1,\widetilde{s}_2)}^2}\\
 &=\operatornamewithlimits{argmin}_{(\widetilde{s}_1,\widetilde{s}_2)\in \mathcal{M}^2_s}\,\, w^2(\widetilde{s}_1,\widetilde{s}_2)\\
 &=\operatornamewithlimits{argmin}_{(\widetilde{s}_1,\widetilde{s}_2)\in \mathcal{M}^2_s}\,\, w(\widetilde{s}_1,\widetilde{s}_2).
 \end{aligned}
 \end{equation}
 This proves the optimality of the proposed decoder, defined in (\ref{eqn10}), at high SNR.

\section{On The Achievable DoF of ID}\label{performance}
As mentioned above, ID applies precoding to a sequence of symbols, which is different from channel coding that is in general applied to a sequence of bits with the objective of enhancing data transmission reliability. Moreover, the channel capacity of MISO is well known and can be achieved using transmit maximum ratio combining (MRC) \cite{Tse:2005:FWC:1111206}. However, traditional precoding techniques are suited for simultaneous transmission of more than one symbol per channel use while maintaining the property that the achievable rate associated with each symbol scales with the transmit power. Otherwise, at  most one symbol will have a non-zero (i.e., one) DoF and the rest will have zero DoFs. This is true even if the sum of their respective rates is much less than the channel capacity. On the other hand, ID allows for simultaneous transmission of two symbols per channel use while ensuring that each symbol has $\frac{1}{2}$ DoF. We stress here that the sum of their respective rates is less than or equal to the channel capacity.

\par In the previous section, we presented the optimal decoder for ID, and it was clear that the proposed decoder does not require to operate under asymptotic conditions, e.g., high SNR. In this section, however, we analyze the performance of ID in the asymptotic sense to make the analysis of the error probability more tractable and to get insight about the achievable DoF.  That is, we use the asymptotic results obtained from the error probability performance analysis to reveal the achievable DoF by each symbol. We show that each symbol achieves $\frac{1}{2}$ DoF, which proves that ID is capable of breaking up a one-dimensional space into two fractional dimensions \cite{interferencGeneral}. Note that it is assumed here that the input is drawn from a real, discrete constellation.


\subsection{Error Probability Performance}\label{DofAnaly}
The optimal decoder presented in (\ref{eqn10}) chooses the symbol vector minimizing the following decision weight component.
$$w(\widetilde{s}_1,\widetilde{s}_2)=\frac{\langle\boldsymbol{y}-\boldsymbol{v}(\widetilde{s}_1,\widetilde{s}_2),\boldsymbol{v}(\widetilde{s}_1,\widetilde{s}_2)\rangle^2}{\norm{\boldsymbol{v}(\widetilde{s}_1,\widetilde{s}_2)}^2}.$$ The receiver correctly decodes the signal vector $\boldsymbol{v}(s_1,s_2)$ only if $$w(s_1,s_2)< w(\widetilde{s}_1,\widetilde{s}_2)\,\,\, \forall (\widetilde{s}_1,\widetilde{s}_2)\in\left\{\mathcal{M}_1^2: (\widetilde{s}_1,\widetilde{s}_2)\neq (s_1,s_2)\right\}.$$ For notational convenience, we use $$\boldsymbol{y}(\beta_1,s_1,s_2)=\boldsymbol{v}(s_1,s_2)+\beta_1\boldsymbol{v}^{\perp}(s_1,s_2)$$ to refer to the noiseless part of $\boldsymbol{y}$ in (\ref{eqn7}). Then we have

\small
 \begin{equation}\label{eqn19}
 \begin{aligned}
w(s_1,s_2)&=\frac{\langle\boldsymbol{y}-\boldsymbol{v}(s_1,s_2),\boldsymbol{v}(s_1,s_2)\rangle^2}{\norm{\boldsymbol{v}(s_1,s_2)}^2}\\
&=\frac{\langle\boldsymbol{n},\boldsymbol{v}(s_1,s_2)\rangle^2}{\norm{\boldsymbol{v}(s_1,s_2)}^2},
 \end{aligned}
 \end{equation}
\normalsize
 and
 \small
 \begin{equation}\label{eqn20}
 \begin{aligned}
 w(\widetilde{s}_1,\widetilde{s}_2)&= \frac{\langle\boldsymbol{y}-\boldsymbol{v}(\widetilde{s}_1,\widetilde{s}_2),\boldsymbol{v}(\widetilde{s}_1,\widetilde{s}_2)\rangle^2}{\norm{\boldsymbol{v}(\widetilde{s}_1,\widetilde{s}_2)}^2}
 \\&=\frac{\langle\boldsymbol{y}(\beta_1,s_1,s_2)-\boldsymbol{v}(\widetilde{s}_1,\widetilde{s}_2)
,\boldsymbol{v}(\widetilde{s}_1,\widetilde{s}_2)\rangle^2+\langle\boldsymbol{n},\boldsymbol{v}(\widetilde{s}_1,\widetilde{s}_2)\rangle^2}{\norm{\boldsymbol{v}(\widetilde{s}_1,\widetilde{s}_2)}^2}\\
&\,\,+ \frac{2\langle\boldsymbol{y}(\beta_1,s_1,s_2)-\boldsymbol{v}(\widetilde{s}_1,\widetilde{s}_2)
,\boldsymbol{v}(\widetilde{s}_1,\widetilde{s}_2)\rangle\langle\boldsymbol{n},\boldsymbol{v}(\widetilde{s}_1,\widetilde{s}_2)\rangle}{\norm{\boldsymbol{v}(\widetilde{s}_1,\widetilde{s}_2)}^2}.
 \end{aligned}
 \end{equation}
 \normalsize
At high SNR, the term $$\frac{\langle\boldsymbol{n},\boldsymbol{v}(\widetilde{s}_1,\widetilde{s}_2)\rangle^2}{\norm{\boldsymbol{v}(\widetilde{s}_1,\widetilde{s}_2)}^2}
-\frac{\langle\boldsymbol{n},\boldsymbol{v}(s_1,s_2)\rangle^2}{\norm{\boldsymbol{v}(s_1,s_2)}^2},$$ which is in the order of $\sigma^2$, is negligible [see Section 6.1.7 in \cite{Goldsmith}] compared to the term $$\frac{2\langle\boldsymbol{y}(\beta_1,s_1,s_2)-\boldsymbol{v}(\widetilde{s}_1,\widetilde{s}_2)
,\boldsymbol{v}(\widetilde{s}_1,\widetilde{s}_2)\rangle\langle\boldsymbol{n},\boldsymbol{v}(\widetilde{s}_1,\widetilde{s}_2)\rangle}{\norm{\boldsymbol{v}(\widetilde{s}_1,\widetilde{s}_2)}^2},$$
which is in the order of $\sqrt{P}\sigma$. The inequality $w(s_1,s_2)<w(\widetilde{s}_1, \widetilde{s}_2)$ then gives
 \begin{equation}\label{eqn21}
 \begin{aligned}
 &0<\frac{2\langle\boldsymbol{y}(\beta_1,s_1,s_2)-\boldsymbol{v}(\widetilde{s}_1,\widetilde{s}_2)
,\boldsymbol{v}(\widetilde{s}_1,\widetilde{s}_2)\rangle\langle\boldsymbol{n},\boldsymbol{v}(\widetilde{s}_1,\widetilde{s}_2)\rangle}{\norm{\boldsymbol{v}(\widetilde{s}_1,\widetilde{s}_2)}^2}\\
&+\frac{\langle\boldsymbol{y}(\beta_1,s_1,s_2)-\boldsymbol{v}(\widetilde{s}_1,\widetilde{s}_2)
,\boldsymbol{v}(\widetilde{s}_1,\widetilde{s}_2)\rangle^2}{\norm{\boldsymbol{v}(\widetilde{s}_1,\widetilde{s}_2)}^2}
 \\ \Rightarrow & 2\langle\boldsymbol{y}(\beta_1,s_1,s_2)-\boldsymbol{v}(\widetilde{s}_1,\widetilde{s}_2)
,\boldsymbol{v}(\widetilde{s}_1,\widetilde{s}_2)\rangle\langle\boldsymbol{n},\boldsymbol{v}(\widetilde{s}_1,\widetilde{s}_2)\rangle >\\
&-\langle\boldsymbol{y}(\beta_1,s_1,s_2)-\boldsymbol{v}(\widetilde{s}_1,\widetilde{s}_2)
,\boldsymbol{v}(\widetilde{s}_1,\widetilde{s}_2)\rangle^2.
\end{aligned}
 \end{equation} \\
Given that $\boldsymbol{n}$ is an AWGN vector, we can obtain a lower bound on the probability of correct decision as follows.
\begin{equation}\label{eqn22}
\begin{aligned}
P_c &\geq \mathcal{Q}\left(-\operatornamewithlimits{min}_{\underset{(\widetilde{s}_1,\widetilde{s}_2)\neq (s_1,s_2)}{(\widetilde{s}_1,\widetilde{s}_2)\in\left\{\mathcal{M}_s^2\right\}}} \frac{\left|\langle\boldsymbol{y}(\beta_1,s_1,s_2)-\boldsymbol{v}(\widetilde{s}_1,\widetilde{s}_2),\boldsymbol{v}(\widetilde{s}_1,\widetilde{s}_2)
\rangle\right|}{2\norm{\boldsymbol{v}(\widetilde{s}_1,\widetilde{s}_2)}{\sigma}}\right)\\
&=\mathcal{Q}\left(-\frac{d_{min}}{2\sigma}\right),
\end{aligned}
\end{equation}
 where $\mathcal{Q}(.)$ denotes the Q-function and  $$d_{min}=\operatornamewithlimits{min}_{\underset{(\widetilde{s}_1,\widetilde{s}_2)\neq (s_1,s_2)}{(\widetilde{s}_1,\widetilde{s}_2)\in\left\{\mathcal{M}_s^2\right\}}} \frac{\left|\langle\boldsymbol{y}(\beta_1,s_1,s_2)-\boldsymbol{v}(\widetilde{s}_1,\widetilde{s}_2),\boldsymbol{v}(\widetilde{s}_1,\widetilde{s}_2)
\rangle\right|}{\norm{\boldsymbol{v}(\widetilde{s}_1,\widetilde{s}_2)}}.$$
We now use the lower bound in (\ref{eqn22}) to provide an upper bound on the error probability, $P_e$, as follows \cite{DetecEstimTheory}.
 \begin{equation}\label{eqn23}
 \begin{aligned}
 P_e&=1-P_c\\
 &\leq 1- \mathcal{Q}\left(-\frac{d_{min}}{2\sigma}\right)
 \\&= \mathcal{Q}\left(\frac{d_{min}}{2\sigma}\right)
 \\ &\leq \exp\left(-\frac{d_{\min}^2}{8\sigma^2}\right).
 \end{aligned}
 \end{equation}
\subsection{The Achievable DoF}\label{Dofs}
We provide here a lower bound on the achievable DoF associated with each symbol per channel use. The achievable DoF represents the rate of growth of the achievable rate with respect to $\frac{1}{2}\log_2(P)$ ($\log_2(P)$  for the complex case) when $P$ tends to infinity. The term $\frac{1}{2}\log_2(P)$  follows from the well known formula of the capacity for the point-to-point AWGN channel, namely $\frac{1}{2}\log_2(1+\text{SNR})$, when the SNR tends to infinity. We first derive the achievable DoF for symbols $s_1$ and $s_2$  then we generalize it for the rest of the symbols. For the rest of this paper, without loss of generality, we assume that $s_1$ and $s_2$ are transmitted over the same channel $h=h_1=h_2$  in order to simplify the analysis. This simply can be achieved by transmitting each pair of symbols from the same antenna, although it is not necessary. Note that, due to symmetry, $s_1$ and $s_2$  have the same mutual information, i,e,. the same rate, which is

\begin{equation}\label{eqn24}
\begin{aligned}
R(s_2)&=R(s_1)\\&=I(s_1;\boldsymbol{y})\\
&=H(s_1)-H(s_1|\boldsymbol{y})
\\&\overset{b}{\geq}\left(1-P_r(s_1\neq\widehat{s}_1)\right)\log_2\left(2Q_s\right)-H(P_r(s_1\neq\widehat{s}_1)),
\end{aligned}
\end{equation}
where $I(\cdot)$ and $H(\cdot)$ denote respectively the mutual information and the entropy. $P_r(\cdot)$ denotes the probability of an event. The inequality (b) follows in (\ref{eqn24}) from the Fano's inequality
\begin{equation}\label{eqn25}
H(s_1|\boldsymbol{y})\leq H(P_r(\widehat{s}_1\neq\widehat{s}_1))+P_r(\widehat{s}_1\neq\widehat{s}_1) H(s_1).
\end{equation}
Moreover, we have $$ P_r(s_1\neq\widehat{s}_1)\leq P_r((s_1, s_2)\neq(\widehat{s}_1,\widehat{s}_2))=P_e,$$
and hence replacing $P_r(s_1\neq\widehat{s}_1)$ by $P_e$ gives the following lower bound on $R(s_1)$ and $R(s_2)$.
\begin{equation}\label{eq20}
R(s_2)=R(s_1)
{\geq}\left(1-P_e\right)\log_2\left(2Q_s\right)-H(P_e).
\end{equation}
In (\ref{eq20}), we derived a lower bound on the achievable rate as a function of $P_e$  which is given as a function of $d^2_{min}$ given in (\ref{eqn23}). It remains to provide an explicit lower bound on $d^2_{min}$  in order to derive an explicit lower bound on
the achievable rate. In the following lemma, we provide an explicit expression for the lower bound on $d^2_{min}$.
\begin{lem}\label{LEM1}
For almost all channel realizations, there exists a real constant $\mathcal{K}\in \mathbb{R}^{*+}$ such that $$d^2_{min}\geq \frac{h^2A_s^2\mathcal{K}^2}{Q_s^2}.$$
\end{lem}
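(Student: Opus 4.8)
The plan is to reduce the statement to an elementary fact about a single scalar quantity and then control that quantity by a generic‑position argument on the channel gains. First I would make the assumption $h_1=h_2=h$ explicit and expand the inner product that defines $d_{min}$. Writing $\boldsymbol{y}(\beta_1,s_1,s_2)-\boldsymbol{v}(\widetilde{s}_1,\widetilde{s}_2)$ componentwise and pairing it with $\boldsymbol{v}(\widetilde{s}_1,\widetilde{s}_2)$, the orthogonality $\boldsymbol v^{\perp}\perp\boldsymbol v$ kills the contribution of the correct pair and leaves, for every candidate,
\[
\inner{\boldsymbol{y}(\beta_1,s_1,s_2)-\boldsymbol{v}(\widetilde{s}_1,\widetilde{s}_2),\boldsymbol{v}(\widetilde{s}_1,\widetilde{s}_2)}=h^2\Big(s_1\widetilde{s}_1+s_2\widetilde{s}_2-\widetilde{s}_1^2-\widetilde{s}_2^2+\beta_1(s_2\widetilde{s}_1-s_1\widetilde{s}_2)\Big).
\]
Since every symbol is an integer multiple of $A_s$, the substitution $s_i=A_sm_i$, $\widetilde{s}_i=A_s\widetilde{m}_i$ factors out $A_s^2$ and exposes two \emph{integer}-valued quantities $a=m_2\widetilde{m}_1-m_1\widetilde{m}_2$ and $b=m_1\widetilde{m}_1+m_2\widetilde{m}_2-\widetilde{m}_1^2-\widetilde{m}_2^2$, so that each term of the minimization equals $hA_s\,\abs{b+\beta_1 a}/\sqrt{\widetilde{m}_1^2+\widetilde{m}_2^2}$.

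Next I would dispatch the denominator. Because $\abs{\widetilde{m}_i}\le Q_s$ we have $\sqrt{\widetilde{m}_1^2+\widetilde{m}_2^2}\le\sqrt{2}\,Q_s$, whence $d_{min}\ge \frac{hA_s}{\sqrt{2}\,Q_s}\,g$, where $g:=\min\abs{b+\beta_1 a}$ is taken over all incorrect candidate pairs (and, for a rate bound that must hold for every message, over all admissible true symbols as well). It then suffices to bound $g$ below by a strictly positive constant for almost all channels, after which the claim follows by squaring and absorbing $\tfrac{1}{2}$ into $\mathcal{K}$.

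The heart of the proof is thus lower‑bounding $g=\min\abs{b+\beta_1 a}$, and here I would split into two cases. If $a=0$ then $b+\beta_1 a=b$ is an integer, and $a=0$ forces the candidate to be collinear with $(m_1,m_2)$, so that $b=0$ occurs only at the true pair; hence $\abs{b+\beta_1 a}\ge 1$ for every incorrect candidate (the zero vector is excluded as it has vanishing norm). If $a\neq 0$, then $b+\beta_1 a=0$ iff $\beta_1=-b/a$; substituting $\beta_1=1+\big(\sum_{k\ge 3}h_k m_k\big)/(hm_2)$ and clearing denominators turns this into the linear relation $hm_2(a+b)+a\sum_{k=3}^{K}h_k m_k=0$ in the channel gains $(h,h_3,\dots,h_K)$, with fixed integer coefficients that are not all zero (the coefficient $a\,m_k$ of an interfering gain is nonzero once $a\neq0$ and some interfering symbol is nonzero). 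Each such relation cuts out a measure‑zero hyperplane; since the alphabet is finite there are only finitely many admissible configurations, hence finitely many hyperplanes, whose union is still a Lebesgue null set. For any channel outside this null set, $b+\beta_1 a\neq 0$ for all incorrect candidates, and the minimum of $\abs{b+\beta_1 a}$ over the finite admissible family is a strictly positive number, which furnishes (a multiple of) $\mathcal{K}$.

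The hard part will be precisely this last step: certifying that the minimum over the finite candidate family stays bounded away from zero for almost all channels, and recognizing that it is the dissolution factor $\beta_1$ that injects the irrational channel gains into $b+\beta_1 a$ — this generic irrationality of $\beta_1$ is exactly the mechanism that aligns the interference, playing here the role that Diophantine approximation plays in classical real interference alignment. I would finally assemble $d_{min}\ge \frac{hA_s}{\sqrt2\,Q_s}\,g$, square, and set $\mathcal{K}=g/\sqrt2$ to obtain $d_{min}^2\ge h^2A_s^2\mathcal{K}^2/Q_s^2$. I would also flag the genuinely degenerate case of a pair with no accompanying interference, where $\beta_1$ is rational (indeed $\beta_1=1$) and ties in the weight can truly occur; this situation must be excluded from the statement or handled separately, and making $\mathcal{K}$ independent of $Q_s$ (rather than merely of a fixed constellation size) is the delicate point where a quantitative Diophantine bound, as in the cited real‑IA works, would be invoked.
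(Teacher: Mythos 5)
Your reduction is algebraically the same as the paper's starting point: your per-candidate quantity $hA_s\abs{b+\beta_1 a}\big/\sqrt{\widetilde{m}_1^2+\widetilde{m}_2^2}$ is exactly the paper's expression for $d_{min}$ in (\ref{eqn27}) after multiplying numerator and denominator by $\widetilde{s}_1$, and your hyperplane argument does show that, for each \emph{fixed} $Q_s$, the minimum $g$ is strictly positive for almost all channels. But that is not the lemma. The constant $\mathcal{K}$ must be independent of $Q_s$: the lemma is consumed downstream with $Q_s^4=P^{1-\epsilon}$ as in (\ref{eqn42}), so $Q_s\to\infty$ with $P$, and a ``constant'' $g=g(Q_s)$ defined as a minimum over a candidate family that grows with $Q_s$ carries no information about how fast it shrinks --- indeed, since $\mathbb{Z}+\beta_1\mathbb{Z}$ is dense in $\mathbb{R}$ for irrational $\beta_1$, one expects $g(Q_s)\to 0$. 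You flag this yourself as ``the delicate point where a quantitative Diophantine bound would be invoked,'' but that step is not a loose end that can be cited away: it is the entire content of the lemma, and the paper's proof consists of exactly that step, namely the Khintchine--Groshev theorem, which for almost all channel realizations gives $\abs{p+\beta_1 q}>\mathcal{K}/\abs{q}$ \emph{uniformly} over all integers $p$ and $q\neq 0$, i.e., a lower bound decaying at the controlled rate $1/Q_s$ rather than an uncontrolled positive minimum.

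There is also a quantitative obstruction in your specific normalization, so the gap is not repaired merely by appending Khintchine--Groshev at the end. In your form the integer multiplying $\beta_1$ is the cross term $a=m_2\widetilde{m}_1-m_1\widetilde{m}_2$, of size up to $2Q_s^2$, while your denominator bound is $\sqrt{2}\,Q_s$; Khintchine--Groshev then yields per candidate only $\abs{b+\beta_1 a}\big/\sqrt{\widetilde{m}_1^2+\widetilde{m}_2^2}\geq \mathcal{K}/(2\sqrt{2}\,Q_s^{3})$, hence $d_{min}^2\geq h^2A_s^2\mathcal{K}^2/(8Q_s^{6})$ --- four powers of $Q_s$ short of the lemma. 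Tracing this through (\ref{eqn38}) and (\ref{eqn23}), the error exponent becomes of order $P/Q_s^{8}$, so the choice $Q_s^4=P^{1-\epsilon}$ no longer drives $P_e\to 0$; one is forced to $Q_s^{8}=P^{1-\epsilon}$, which delivers only $\tfrac{1}{4}(1-\epsilon)$ DoF per symbol instead of $\tfrac{1}{2}$. The paper's three-case analysis on the ratio $\widetilde{s}_2/\widetilde{s}_1$ (limits $\pm 1$, $0$, and intermediate $c$) exists precisely to avoid this loss: in the near-parallel case the common factor $\widetilde{m}_1$ in $a\simeq \widetilde{m}_1(m_2\mp m_1)$ is cancelled against the candidate norm, so that Khintchine--Groshev is always applied with an integer coefficient that is a \emph{single-symbol} quantity of size $O(Q_s)$ --- $(s_2\mp s_1)/A_s$ in Case I, $s_2/A_s$ in Case II, and componentwise $s_2/A_s$, $s_1/A_s$ in Case III --- while the denominator stays $O(1)$, giving $1/Q_s$ per candidate and hence the claimed $h^2A_s^2\mathcal{K}^2/Q_s^2$. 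So your proposal is missing not only the Diophantine input but also the case-by-case cancellation that makes that input strong enough.
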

\begin{proof}See Appendix \ref{appendix}.\footnote{ In \cite{interferencGeneral}, the received signal is expressed as a linear combination of the transmitted signals and the lower bound on the minimum distance follows from direct application of the Khintchine-Groshev Theorem. However, using ID, the received signal is a nonlinear combination of signals and the proof differs considerably from the one given in \cite{interferencGeneral}.}\end{proof}
The minimum value of the noiseless part of $w(\widetilde{s}_1,\widetilde{s}_2)$, over $\{(\widetilde{s}_1,\widetilde{s}_2)\in\mathcal{M}_s^2:(\widetilde{s}_1,\widetilde{s}_2)\neq(s_1,s_2)\}$ equals to $d_{min}$ which is strictly greater than zero for almost all channel realizations. This proves that the noiseless part of $w(\widetilde{s}_1,\widetilde{s}_2)$ gives zero only if $(\widetilde{s}_1,\widetilde{s}_2)=(s_1,s_2)$.
\par Each transmitted signal from $\{s_1,s_2\}$ is subject to a power constraint $P$. When $P$ is relatively large, it can be expressed as

\begin{equation}\label{eqn38}
\begin{aligned}
P&=E[s_i^2]=\frac{1}{2Q_s}\overset{l=Q_s}{\underset{l=-Q_s,l\neq0}{\sum}}A_s^2l^2\\
&=\frac{A_s^2}{Q_s}\overset{l=Q_s}{\underset{l=1}{\sum}}l^2\\
&=\frac{A_s^2Q_s(Q_s+1)(2Q_s+1)}{6Q_s}\\
&=\frac{A_s^2(Q_s+1)(2Q_s+1)}{6}\\
&\simeq \frac{A_s^2Q_s^2}{3}.
\end{aligned}
\end{equation}
Lemma \ref{LEM1}, (\ref{eqn23}) and (\ref{eqn38}) give an upper bound on the error probability as follows.
\begin{equation}\label{eqn41}
P_e \leq \exp\left(-\frac{3h^2\mathcal{K}^2P}{8\sigma^2Q_s^{4}}\right).
\end{equation}
Let $\epsilon$ be an arbitrarily small constant such that $\epsilon > 0$. In order to get a small error probability as desired at high transmit power, the transmit constellation size may be selected as follows.
\begin{equation}\label{eqn42}
Q_s^4=P^{1-\epsilon}
\Rightarrow Q=P^{\frac{1-\epsilon}{4}}.
\end{equation}
Now, we use equations (\ref{eqn24}), (\ref{eqn41}) and (\ref{eqn42}) to give a lower bound on the achievable rate. That is,
\begin{equation}\label{eq44}
\begin{aligned}
R(s_1)\geq &\left(1-\exp\left(-\frac{3h^2 \mathcal{K}^2P^\epsilon}{8\sigma^2}\right)\right)\\& \times \log_2\left(2P^{\frac{1-\epsilon}{4}}\right)-H\left(\exp\left(-\frac{3h^2 \mathcal{K}^2P^\epsilon}{8\sigma^2}\right)\right).
\end{aligned}
\end{equation}
Recall that the achievable DoF represents the rate of growth of the achievable rate with respect to $\frac{1}{2}\log_2(P)$ when $P$ tends to infinity. Therefore, the DoF associated with $s_1$ and $s_2$ can be provided by using (\ref{eq44}) as follows.
\begin{equation}\label{eqn43}
\begin{aligned}
\underset{P\rightarrow \infty}{\lim} \frac{R(s_1)}{\frac{1}{2}\log_2(P)} &\geq \underset{P\rightarrow \infty}{\lim} \frac{\log_2\left(2P^{\frac{1-\epsilon}{4}}\right)}{{\frac{1}{2}\log_2(P)}}+ \frac{H(0)}{\frac{1}{2}\log_2(P)}\\
&=\underset{P\rightarrow \infty}{\lim} \frac{2(1-\epsilon)\log_2\left(P\right)}{4\log_2(P)}\\
&=\frac{1}{2}-\frac{\epsilon}{2}.
\end{aligned}
\end{equation}
Since $\epsilon$  is arbitrarily small, we can conclude that each of $s_1$ and $s_2$ gets at least $\frac{1}{2}$ DoF. The same holds true for the remaining  $K-2$ symbols. In Section \ref{interference_dissolution}, we showed that two independent signals can be transmitted per channel use when $K$ is large.
 Since the MISO channel offers one DoF, and each symbol (as per (\ref{eqn43})) achieve at least $\frac{1}{2}$ DoF, then each symbol achieves exactly $\frac{1}{2}$ DoF. This suggests that the proposed ID technique allows for transmitting up to two independent symbols per channel use and each symbol gets $\frac{1}{2}$ DoF. This proves the capability of ID to achieve two symbols per channel use while providing $\frac{1}{2}$  DoF for each symbol.
\section{On The Achievable Rate of ID}
As shown above, ID allows for the transmission of two symbols per channel use while providing $\frac{1}{2}$ DoF for each symbol, i.e., ID breaks up the one-dimensional space into two fractional dimensions. Those results were obtained under the assumption that the symbols are drawn from a real discrete constellation and the SNR tends to infinity. This suggests that each symbol gets an associated rate that asymptotically equals half of the channel capacity. It is of interest, however, to study the achievable rate of the proposed ID for the entire range of SNR. For discrete inputs, obtaining an explicit expression for the achievable rate, using Fano’s inequality, is intractable at finite SNR. Therefore, in this section, to derive an expression for the achievable rate, we assume that the input is drawn from a Gaussian distribution, and consequently the received signal tends to Gaussian.

\par Recall that the capacity of the MISO channel is well known, which can be expressed as
\begin{equation}\label{eq1}
C=\frac{1}{2}\log_2\left(1+\frac{P_s\sum_{n=1}^{N}|g_n|^2}{\sigma^2}\right),
\end{equation}
where $\{g_1,g_2, \ldots,g_N\}$ are the channel gains between the transmit antennas and the receive antenna, and $P_s$ is the transmit power per symbol. Now we investigate the achievable rate of the proposed ID and relate it to the capacity expression given by (\ref{eq1}). One of the assumptions we had from the very beginning was that $K$ tends to be large to achieve two symbols per channel use, which implies that $K>N$. Furthermore, the set  $\{g_1,g_2, \ldots,g_N\}$  contains the set $\{h_1, h_2,\ldots,h_k\}$. As such, when all antennas are used, we have
\begin{equation}\label{eq2}
\sum_{k=1}^{K}|h_k|^2>\sum_{n=1}^{N}|g_n|^2.
\end{equation}

\par During the first channel use, the transmitter sends $K$ symbols while during each of the $\left\lceil\frac{K}{2}\right\rceil$ remaining channel uses only two symbols are transmitted. The total transmit power is thus $$\left(K+2\left\lceil\frac{K}{2}\right\rceil\right)P\simeq2KP.$$
Since the number of transmitted symbols is $K$, then the transmit power per symbol is $P_s=2P$, as per (\ref{eq2}). Consequently, the capacity expression in (\ref{eq1}) can be upper bounded as

\begin{equation}\label{eqcapacity}
\begin{aligned}
C&=\frac{1}{2}\log_2\left(1+\frac{2P\sum_{n=1}^{N}|g_n|^2}{\sigma^2}\right)\\
&<\frac{1}{2}\log_2\left(1+\frac{2P\sum_{k=1}^{K}|h_k|^2}{\sigma^2}\right)\\
&<\frac{1}{2}\log_2\left(2+\frac{2P\sum_{k=1}^{K}|h_k|^2}{\sigma^2}\right)\\
&=\frac{1}{2}\left(\log_2\left(1+\frac{P\sum_{k=1}^{K}|h_k|^2}{\sigma^2}\right)+1\right).
\end{aligned}
\end{equation}
\par We next provide the achievable rate associated with symbols $s_1$ and $s_2$ defined by (\ref{eqn7}). We then generalize the result to the remaining $K-2$ symbols. For a given channel realization, the achievable rate for $s_1$ and $s_2$ is written as

\begin{equation}\label{eqn44}
\begin{aligned}
R(s_1,s_2)&=I(s_1,s_2;y_1,y_2)\\
&=H(y_1,y_2)-H(y_1,y_2|s_1,s_2)\\
&=\frac{1}{2}\log_2\left(\det\left[C(y_1,y_2)\right]\right)\\&-\frac{1}{2}\log_2\left(\det\left[C(y_1,y_2|s_1,s_2)\right]\right),
\end{aligned}
\end{equation}
 where $C(y_1,y_2)$ is the covariance of $(y_1, y_2)$. $C(y_1,y_2|s_1,s_2)$ denotes the covariance of $(y_1, y_2)$ given $(s_1, s_2)$. These two terms can be expressed as
\begin{equation}\label{eqn45}
\begin{aligned}
C(y_1,y_2)&=\begin{pmatrix}E[y_1^2]& E[y_1 y_2]\\
 E[y_1 y_2]&  E[y_2^2]\end{pmatrix}\\
 &=P\begin{pmatrix}\sum_{k=1}^{K}|h_k|^2& 0\\
0 &
\sum_{k=1}^{K}|h_k|^2\end{pmatrix}\\
 &+ \begin{pmatrix}\sigma^2 & 0\\ 0& \sigma^2\end{pmatrix},
\end{aligned}
\end{equation}
and
\begin{equation}\label{eqn46}
\begin{aligned}
C(y_1,y_2|s_1,s_2)&=\begin{pmatrix}E[y_1^2]& E[y_1 y_2]\\
 E[y_1 y_2]&  E[y_2^2]\end{pmatrix}\\
 &=P\begin{pmatrix}\sum_{k=3}^{K}|h_k|^2& -\sum_{k=3}^{K}|h_k|^2\frac{s_1}{s_2}\\
 -\sum_{k=3}^{K}|h_k|^2\frac{s_1}{s_2} & \sum_{k=3}^{K}|h_k|^2\frac{s_1^2}{s_2^2}
\end{pmatrix}\\
&+\begin{pmatrix}\sigma^2 & 0\\ 0& \sigma^2\end{pmatrix}.
\end{aligned}
\end{equation}
Using (\ref{eqn45}) and (\ref{eqn46}), we can expand the two terms in (\ref{eqn44}), respectively, as follows. (The expression corresponding to the second term in (\ref{eqn44}) is given in (\ref{eqn48}) on the next page.) The first term in (\ref{eqn44}) can be expressed as
\begin{equation}\label{eqn47}
\begin{aligned}
\log_2\left(\det\left[C(y_1,y_2)\right]\right)&=
\log_2\left(P\sum_{k=1}^{K}|h_k|^2+\sigma^2
\right)^2\\
&=2\log_2\left(P\sum_{k=1}^{K}|h_k|^2+\sigma^2
\right).
\end{aligned}
\end{equation}
\begin{figure*}
\begin{equation}\label{eqn48}
\begin{aligned}
\log_2\left(E\left[\det\left[C(y_1,y_2|s_1,s_2)\right]\right]\right)&=
\log_2\left(\left(P\sum_{k=3}^{K}|h_k|^2+\sigma^2\right)\left(P\sum_{k=3}^{K}|h_k|^2+\sigma^2\right)-\left(P\sum_{k=3}^{K}|h_k|^2\right)^2
\right)\\
&=\log_2\left(\sigma^2\left(2P\sum_{k=3}^{K}|h_k|^2+\sigma^2\right)\right)\\
&=\log_2(\sigma^2)+\log_2\left(2P\sum_{k=3}^{K}|h_k|^2+\sigma^2\right).\\
\end{aligned}
\end{equation}
\rule{\textwidth}{0.5pt}
\end{figure*}
Consequently, the achievable rate in (\ref{eqn44}) becomes
\begin{equation}\label{eqn49}
\begin{aligned}
R(s_1,s_2)= \frac{1}{2}\log_2\left(1+\frac{P\sum_{k=1}^{K}|h_k|^2}{\sigma^2}\right)\\
+\frac{1}{2}\log_2\left(\frac{\sigma^2+P\sum_{k=1}^{K}|h_k|^2}{2P\sum_{k=3}^{K}|h_k|^2+\sigma^2}\right).
\end{aligned}
\end{equation}
Similar steps can be followed to find a similar expression for the remaining signal pairs. Specifically, the achievable rate for the $m\text{th}$ signal pair is given as
\begin{equation}\label{eqn50}
\begin{aligned}
R(s_{2m-1},s_{2m})&=I(s_{2m-1},s_{2m}:y_1,y_{m+1})
\\&=\frac{1}{2}\log_2\left[\left(1+\frac{P\sum_{k=1}^{K}|h_k|^2}{\sigma^2}\right)\right.\\
&\left.\left(\frac{\sigma^2+P\sum_{k=1}^{K}|h_k|^2}{2
P\sum_{\underset{k\neq\{2m-1,2m\}}{k=1}}^{K}|h_k|^2+\sigma^2}\right)\right].
\end{aligned}
\end{equation}

Armed with the above results, we now find the overall achievable rate per channel use, $R$. Note that $\lceil\frac{K}{2}\rceil+1$  channel uses are used to transmit the $K$ signals. As such, $R$ can be expressed as shown in (\ref{eqn51}) (on the next page).
\begin{figure*}[t]
\begin{equation}\label{eqn51}
\begin{aligned}
R&=\frac{1}{\lceil\frac{K}{2}\rceil+1}\sum_{m=1}^{\lceil\frac{K}{2}\rceil}R(s_{2m-1},s_{2m})\\
&=\frac{1}{2\left(\lceil\frac{K}{2}\rceil+1\right)}\left(\left\lceil\frac{K}{2}\right\rceil\log_2\left(1+\frac{P\sum_{k=1}^{K}|h_k|^2}{\sigma^2}\right)\right.
+\left.\sum_{m=1}^{\lceil\frac{K}{2}\rceil}\log_2\left(\frac{\sigma^2+P\sum_{k=1}^{K}|h_k|^2}{2
P\sum_{\underset{k\neq\{2m-1,2m\}}{k=1}}^{K}|h_k|^2+\sigma^2}\right)\right).\\
\end{aligned}
\end{equation}
\rule{\textwidth}{0.5pt}
\end{figure*}
\par In the following, we provide a characterization of the achievable rate associated with each symbol to within half a bit from half of the channel capacity by providing a lower
bound on the second term in (\ref{eqn51}). We have the following upper bound on the denominator of the second term in (\ref{eqn51}), that is
\begin{equation}\label{eq4}
\begin{aligned}
2P\sum_{\underset{k\neq\{2m-1,2m\}}{k=1}}^{K}|h_k|^2+\sigma^2< 2
P\sum_{k=1}^{K}|h_k|^2+2\sigma^2.
\end{aligned}
\end{equation}

\begin{figure*}
\begin{equation}\label{eqn54}
\begin{aligned}
R &\geq \frac{\lceil\frac{K}{2}\rceil}{2\left(\lceil\frac{K}{2}\rceil+1\right)}\left(\log_2\left(1+\frac{P\sum_{k=1}^{K}|h_k|^2}{\sigma^2}\right)
+\sum_{m=1}^{\lceil\frac{K}{2}\rceil}\log_2\left(\frac{\sigma^2+P\sum_{k=1}^{K}|h_k|^2}{2
P\sum_{k=1}^{K}|h_k|^2+2\sigma^2}\right)\right)\\
&= \frac{\lceil\frac{K}{2}\rceil}{2\left(\lceil\frac{K}{2}\rceil+1\right)}\left(\log_2\left(1+\frac{P\sum_{k=1}^{K}|h_k|^2}{\sigma^2}\right)
+\log_2\left(\frac{1}{2}\right)\right)\\
\\&{\simeq} \frac{1}{2}\left(\log_2\left(1+\frac{P\sum_{k=1}^{K}|h_k|^2}{\sigma^2}\right)
-1\right)\\
&>C-1.
\end{aligned}
\end{equation}
\rule{\textwidth}{0.5pt}
\end{figure*}

\par In (\ref{eqn54}), on the next page, we provide a lower on the achievable rate as a function of the channel capacity. The second line of (\ref{eqn54}) is obtained by invoking (\ref{eq4}) in (\ref{eqn51}), whereas the third line follows from the assumption that $K$ is large. Moreover, we make use of the inequality in (\ref{eqcapacity}) to arrive at the last inequality in (\ref{eqn54}). This result suggests that the achievable rate per channel use is one bit away from the channel capacity. Since two symbols are transmitted per channel use, and by invoking symmetry, it is clear that each symbol is half a bit away from half of the capacity for the entire range of SNR. This is a much improved achievable rate as compared to the one achieved by the real alignment schemes proposed in \cite{one_dimension1,one_dimension2,interferencGeneral,IntAlignRobus} where the achievable rate does not exceed $50\%$ of the channel capacity at finite SNR.

\section{Application of ID to a $3$-User Multicast Channel}
In this section, we adapt the proposed ID to a $3$-user multicast channel (sometimes referred to as a $3$-user broadcast channel \cite{dualityBroadMulti}). In particular, the transmitter is equipped with a single antenna and each of the users is equipped with a single antenna. Clearly the available DoF is one, i.e., it is a one-dimensional space.  The transmitter will have to communicate simultaneously with the three users. To transmit signals without interference, three channel uses are required and this yields one symbol per channel use. Reducing the number of required channel uses leads to interference, and therefore linearly resolving all signals at all receivers is not possible. To deal with this, one would need to use interference management in a one-dimensional space such as real interference alignment, including the proposed ID. Otherwise successive decoding may be used but this offers poor performance \cite{latice1},\cite{latice2}.

\par As discussed in previous sections, traditional techniques such as successive decoding will result in symbols achieving either one or zero DoF. Given that the total available DoF is one, the transmitter can only transmit one signal with non zero DoF using such traditional techniques. Real interference alignment may be used, but as discussed at length in previous sections, it yields poor performance at finite SNR. As a remedy to this challenge, we employ the proposed ID, where it becomes possible to break-up the one-dimensional space into two fractional dimensions, and this yields better performance for the entire SNR range. Furthermore, it becomes possible to communicate three symbols per two channel uses, i.e., $\frac{3}{2}$ symbols per channel use is achieved, where one symbol achieves one DoF and each of the other two symbols achieves $\frac{1}{2}$ DoF, as will be shown below. Note that a higher DoF translates to a higher bandwidth efficiency, but this has no impact on the resulting SER performance.

\par To elaborate, we consider the scenario where the transmitter communicates three real symbols $s_1$, $s_2$ and $s_3$ , drawn from discrete constellations (could be the same or different constellations) with three users, $U_1$, $U_2$ and $U_3$, respectively. The channel gains between the transmitter and the three users are denoted by $h_1$, $h_2$ and $h_3$. Transmission is performed over two channel uses. In the first channel use, the transmitter sends

 $$s_1+s_2+\alpha s_3.$$

We note that the coefficient $\alpha$ is a real and irrational constant and is used to ensure that $\beta$ is also real and irrational (see Lemma \ref{LEM1}). It is assumed that $\alpha$ is known to all users. In choosing the value of $\alpha$, a good choice would be one that respects the constraint on the transmit power. For example, $\alpha= \frac{\sqrt{3}}{2}$ would be a good choice.

\par In the second channel use, the transmitter dissolves $\alpha s_3$ into $s_2$ by calculating the dissolution factor
$$\beta=1+\alpha\frac{s_3}{s_2}.$$
We note here that $\beta$ is real given that $\alpha$ is real.
The transmitter then sends $$s_2-\beta s_1.$$ In the two channel uses, the received signals vector by user $U_i$ can be written as
\begin{equation}\label{eq3}
\begin{pmatrix}y^{(i)}_1\\ y^{(i)}_2\end{pmatrix}=h_i\left(\begin{pmatrix}s_1\\s_2\end{pmatrix}+\beta\begin{pmatrix}s_2\\-s_1\end{pmatrix}\right)+\begin{pmatrix}n^{(i)}_1\\ n^{(i)}_2\end{pmatrix}
\end{equation}
where $\boldsymbol{n}^{(i)}=\begin{pmatrix}n^{(i)}_1,n^{(i)}_2\end{pmatrix}^T$ denotes an AWGN vector with zero mean and covariance matrix $\sigma^2I_{2\times2}$. Here, $I_{2\times2}$ denotes the identity matrix of size $2\times2$.

\par Now that all receivers have the signal given by (\ref{eq3}), they all proceed in decoding $s_1$ and $s_2$, following the ID decoding process described in Section \ref{interference_dissolution}. Given that $\beta$ is real, $s_1$ and $s_2$ each will get $\frac{1}{2}$ DoF over two channel uses (i.e., $\frac{1}{4}$ DoF per symbol per channel use). Next, we show that $s_3$ achieves one DoF (which is  by definition obtained at high transmit power). We recall here that ID achieves near-capacity performance for the entire SNR range as proved in the previous section.
\par  Each of $s_1$ and $s_2$ gets $\frac{1}{2}$ DoF . Thus, at high transmit power, the
transmitter can communicate reliably \cite{Cover} (i.e., with zero error probability) $s_1$ and $s_2$ with all receivers with a rate equal or less than half of the channel capacity, i.e., $\frac{1}{2}C$. In this case, the estimated symbols $\widehat{s}_1$ and $\widehat{s}_2$ match exactly $s_1$ and $s_2$, respectively. $U_3$ can
decode $s_1$ and $s_2$ with zero error probability. After
correctly decoding  $s_1$ and $s_2$, $U_3$ decodes $s_3$ using  $\widehat{y}_1^{(3)}$ derived from the signal received in the first channel use.
\begin{equation}
\begin{aligned}
\widehat{y}^{(3)}_1&=h_3\left((s_1-\widehat{s_1})+\right(s_2-\widehat{s_2})+\alpha s_3)+n^{(3)}_1\\
&=\alpha h_3s_3+n^{(3)},
\end{aligned}
\end{equation}
where $n^{(3)}$ denotes AWGN with zero mean and variance $\sigma^2$. It is clear that $s_3$ gets one DoF. In two channel uses, ID allows to communicate $s_1$, $s_2$ and $s_3$ with $U_1$, $U_2$ and $U_3$ respectively while providing $\frac{1}{2}$ DoF for each of $s_1$ and $s_2$ and one DoF for $s_3$.

\par We stress here that the decoding process described above should not be confused with successive decoding, although they may appear to be similar since $s_1$ and $s_2$ are decoded first and then $s_3$ is decoded next. The difference is explained as follows. In decoding $s_1$ and $s_2$, using ID, the interference caused by $s_3$ is removed, owing to the fact that $s_3$ is aligned by $s_1$ and $s_2$. On the hand, in successive decoding, $s_3$ causes interference while decoding $s_1$ and $s_2$. The impact of this on the performance is significant. To elaborate, let us consider $U_3$. If it first decodes $s_3$, it will consider $s_1$ and $s_2$  as noise and thus the achievable rate associated with $s_3$  does not scale with the transmit power., which means that the DoF associated with $s_3$ is zero. If $U_3$ decodes $s_3$  after decoding $s_1$ and $s_2$, it considers $s_3$  as noise in decoding $s_1$ and $s_2$. Therefore, the rate associated with $s_1$ and $s_2$ does not scale with the transmit power. The transmitter then should transmit $s_1$ and $s_2$ at a very low rate in order to allow $U_3$  to reliably decode $s_1$ and $s_2$  and achieve non zero DoF associated with $s_3$ . In this case, $s_1$ and $s_2$  will both get zero DoF, and this will result in achieving zero DoF for $U_1$ and $U_2$.

\par In conclusion, we demonstrated through the above example that ID can save one channel use. In general, since the proposed ID is able to break-up the one dimensional space into two fractional dimensions, where each user achieves $\frac{1}{2}$ DoF, ID reduces the required number of channel uses by $50\%$. i.e., each channel use can serve up to two users.

\section{Simulation Results}\label{simulation}
We performed Monte Carlo simulations to evaluate the performance of ID in terms
of the achievable rate and SER when the input signals belong to discrete constellations. We consider a $2\times 1$ MISO channel. We recall that the previous theoretical results pertaining to the ID performance are provided for a given channel realization. In this section, we consider that the channel coefficients follow the Rayleigh distribution with variance $\gamma=1$. The SER and achievable rate are averaged over a large number of channel realizations. Thus, the simulation results represent the average SER and the ergodic achievable rate which we refer to hereafter by SER and achievable rate, respectively. We consider AWGN with zero mean and variance $\sigma^2= 1$. 

\subsection{Symbol Error rate}
In this subsection, we provide the SER as a function of the transmit power. We consider $4-\text{PAM}$ for the symbol constellations. Moreover, we set the total transmission power to $2P$ per channel use. Given that ID allows the transmission of two symbols per channel use, in fairness, we compare the performance of ID to the successive decoding technique where the transmission rate is two symbols per channel use. We show first the impact of interference at the receiver considering successive decoding \cite{MRC}. The performance of our proposed technique is also compared to the $2\times1$ point-to-point MISO channel where the transmitter sends one symbol per channel use. 
\begin{figure}
\begin{centering}
\includegraphics[scale=0.45]{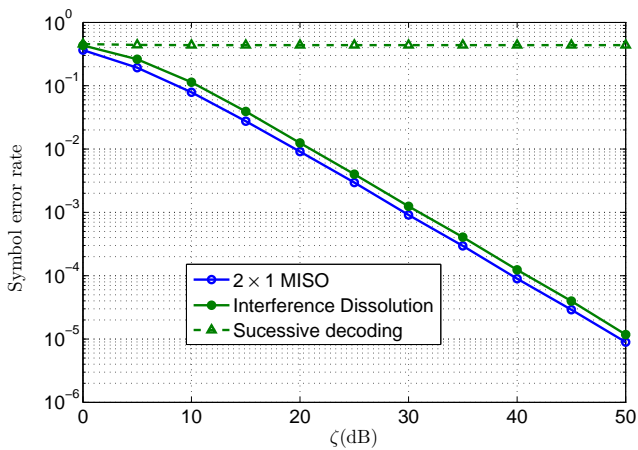}
\caption{SER versus $\zeta$(dB) for $4-\text{PAM}$ constellation points.}
\label{SymErrorRate}
\end{centering}
\end{figure}

\par  Fig. \ref{SymErrorRate} shows the SER versus $\zeta$ in dB, where $\zeta\triangleq\frac{P}{\sigma^2}$. As can be observed from Fig. \ref{SymErrorRate}, successive decoding achieves poor performance for the entire SNR range, which proves its inefficiency especially when all transmitted signals are of similar strength. The figure also shows that ID provides performance close to that of the $2\times1$ point-to-point MISO channel, which demonstrates the efficacy of ID in managing interference.
\subsection{Normalized Achievable Rate}

Fig. \ref{AchievableRate} depicts the ID achievable rate per symbol considering the lower bound in (\ref{eqn54}) for the Gaussian case and based on the Fano's inequality given in (\ref{eqn24}) for the discrete signals case. The achievable rate per symbol is normalized over the half of the MISO channel capacity $$\frac{1}{2}C=\frac{\frac{1}{2}\log_2\left(1+\frac{2P\sum_{n=1}^{N}|g_n|^2}{\sigma^2}\right)}{2}$$ given in (\ref{eqcapacity}). In the same figure, we plot the lower bound that is based on Fano's inequality, assuming discrete inputs. This lower bound is valid for both ID and real interference alignment in general.
\par Recall that the achievable rate has been shown to be at most one bit a way from the channel capacity. When the channel capacity is small, i.e., small SNR, the one-bit gab becomes more pronounced, as can be seen in Fig. \ref{AchievableRate}, where the normalized ID normalized achievable rate is close to zero at $SNR=0\text{dB}$. However, as the capacity increases, i.e., at high SNR, the one-bit gap diminishes and becomes insignificant. The achievable rate in this case approaches capacity, as is evident from Fig. \ref{AchievableRate} at $SNR=18$dB

\begin{figure}
\begin{centering}
\includegraphics[scale=0.5]{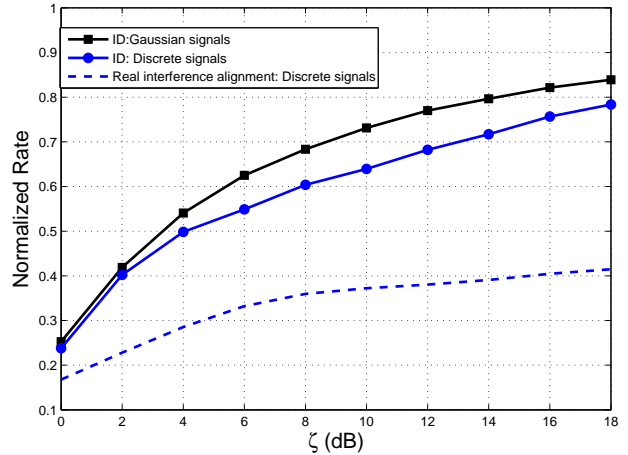}
\caption{Normalized rate as a function of $\zeta$(dB).}
\label{AchievableRate}
\end{centering}
\end{figure}



\section{Conclusion}
We proposed in this paper a nonlinear interference alignment scheme that involves breaking-up a one-dimensional space over time-invariant channels into two fractional dimensions, one used to confine the interfering signals and the other to confine the intended signal. This partition leads to achieving two symbols per channel use. The main feature of the proposed technique is that signals are aligned by other signals, which distinguishes it from existing real interference alignment schemes whereby signals are aligned by the channel. Assuming a point-to-point MISO channel, we analyzed the performance of the proposed scheme in terms of the achievable rate and DoF. We showed that the proposed technique achieves near-capacity performance and the achievable rate is always within one bit away from the channel capacity. Furthermore, since rate two symbols per channel use is achieved, we showed that each of the two symbols achieves $\frac{1}{2}$ DoF, which means that both symbols can be separated perfectly at the receiver. The proposed scheme can be applied to other scenarios with favorable results, including the many-to-one channel, multi-user channels and the wiretap channel, among others.  In fact, we adapted the proposed scheme to a $3$-user multi-cast channels and showed that rate $\frac{3}{2}$ symbols per channel use is possible to achieve. We stress here that the precoding process described in this paper applies in the general sense to other scenarios, but the specifics of the precoding process could vary from one application to another, depending on the channel model. For instance, the precoding processes for the point-to-point MISO and the $3$-user multi-cast cases considered in this paper were in general similar, but were not exactly the same.

\appendices
\section{}\label{appendix}
In this section, we use $\begin{pmatrix}\overline{d}_1\\ \overline{d}_2\end{pmatrix}$ to denote the vector  $\boldsymbol{y}(\beta_1,s_1,s_2)-\boldsymbol{v}(\widetilde{s}_1,\widetilde{s}_2)$. $d^2_{min}$ can then be written as
\begin{equation}\label{eqn27}
\begin{aligned}
d_{min}^2&= \operatornamewithlimits{min}_{\underset{(\widetilde{s}_1,\widetilde{s}_2)\neq (s_1,s_2)}{(\widetilde{s}_1,\widetilde{s}_2)\in\left\{\mathcal{M}_s^2\right\}}} \frac{|\langle(\overline{d}_1\overline{d}_2),\boldsymbol{v}(\widetilde{s}_1,\widetilde{s}_2)\rangle|^2}{\norm{\boldsymbol{v}(\widetilde{s}_1,\widetilde{s}_2)}^2}
\\&=\operatornamewithlimits{min}_{\underset{(\widetilde{s}_1,\widetilde{s}_2)\neq (s_1,s_2)}{(\widetilde{s}_1,\widetilde{s}_2)\in\left\{\mathcal{M}_s^2\right\}}} \frac{(\overline{d}_1 h \widetilde{s}_1+\overline{d}_2 h \widetilde{s}_2)^2}{(h\widetilde{s}_1)^2+(h \widetilde{s}_2)^2} \\
&=\operatornamewithlimits{min}_{\underset{(\widetilde{s}_1,\widetilde{s}_2)\neq (s_1,s_2)}{(\widetilde{s}_1,\widetilde{s}_2)\in\left\{\mathcal{M}_s^2\right\}}} \frac{\left(\overline{d}_1+\overline{d}_2 \frac{\widetilde{s}_2}{\widetilde{s}_1}\right)^2}{1+\frac{(\widetilde{s}_2)^2}{(\widetilde{s}_1)^2}}.
\end{aligned}
\end{equation}
Without loss of generality, we consider the case when $|\widetilde{s}_1|\geq |\widetilde{s}_2|$ to derive a lower bound on $d^2_{min}$. For the case $|\widetilde{s}_1|\leq |\widetilde{s_2}|$, one may follow the same steps to obtain a similar expression. At high transmit power, i.e., high value of $Q_s$, three cases are possible, namely,
\begin{equation}\label{eqn28}
\operatornamewithlimits{\lim}_{Q_s\to \infty}\frac{\widetilde{s_2}}{\widetilde{s}_1}=
\begin{cases}
\pm 1 &\mbox{if} \, |\widetilde{s}_1|\simeq|\widetilde{s_2}|\\
0 &\mbox{if} \, |\widetilde{s}_1|\gg |\widetilde{s_2}| \\
c & \mbox{otherwise}
\end{cases},
\end{equation}
where $c$ is a constant ($1 > |c| \gg \frac{1}{Q}\simeq 0$). Next, we differentiate between the three cases and we derive the desired results for each case separately. In the following, we use Khintchine-Groshev theorem. From \cite{interferencGeneral}, for almost all m-tuple real numbers $\{\alpha_1, \alpha_2,..., \alpha_m\}$ there exists a real constant $\mathcal{K}>0$ such that
\begin{equation}\label{eqn30}
|p+\alpha_1 q_1+\alpha_2 q_2+...+\alpha_m q_m|>\frac{\mathcal{K}}{(\max_i|q_i|)^{m}},
\end{equation}
for all $p\in \mathbb{Z}$ and $\{q_1, q_2,..., q_m\}\in (\mathbb{Z}^{*})^m$. So, it is important to note that $\frac{1}{A_s}\{s_1,\,s_2,\,\widetilde{s}_1,\, \widetilde{s}_2\}$ is a set of integers in $(\mathbb{Z}^{*}_{Q})^4$. In Section \ref{interference_dissolution}, we assumed that at least two signals have two different associated channel gains. This condition is important to guarantee that $\beta_1$ is a real number. Indeed, assuming that all signals are transmitted over channel $h$ only the $m\text{th}$ signal is transmitted over a channel $h_m$. In this case, $\beta_1=1+\sum_{\underset{k\neq m}{k=3}}^{K}\frac{s_k}{s_2}+\frac{h_ms_m}{hs_2}$. Now, it is clear that $\beta_1$ is a number given that $h_m$ and $h$ are two independent real channel gains \cite{interferencGeneral}.
\paragraph{Case $I$ ($\frac{\widetilde{s}_2}{\widetilde{s}_1}\simeq \pm 1$)}In this case, (\ref{eqn27}) becomes
\begin{equation}\label{eqn29}
\begin{aligned}
d^2_{min}
&\simeq \operatornamewithlimits{min}_{\underset{(\widetilde{s}_1,\widetilde{s}_2)\neq (s_1,s_2)}{(\widetilde{s}_1,\widetilde{s}_2)\in\left\{\mathcal{M}_s^2\right\}}}\frac{\left(\overline{d}_1\pm \overline{d}_2 \right)^2}{2}\\
&=\operatornamewithlimits{min}_{\underset{(\widetilde{s}_1,\widetilde{s}_2)\neq (s_1,s_2)}{(\widetilde{s}_1,\widetilde{s}_2)\in\left\{\mathcal{M}_s^2\right\}}}\frac{h^2A_s^2}{2}\left(\frac{s_1}{A_s}+\frac{s_2}{A_s}\pm\left(\frac{\widetilde{s}_1}{A_s}+\frac{\widetilde{s}_2}{A_s}\right)\right.
\\&\,\,\,\,\,+
\left.\beta_1\left(\frac{s_2}{A_s}-\frac{s_1}{A_s}\right)\right)^2\\
&\overset{c}{\geq}\operatornamewithlimits{min}_{\underset{(\widetilde{s}_1,\widetilde{s}_2)\neq (s_1,s_2)}{(\widetilde{s}_1,\widetilde{s}_2)\in\left\{\mathcal{M}_s^2\right\}}} \frac{h^2 A_s^2}{2}\left(\frac{\mathcal{K}_1}{2Q_s}\right)^2
\\&= \frac{h^2 A_s^2 \mathcal{K}_1^2}{2^{3}Q_s^{2}},\\
\end{aligned}
\end{equation}
where $\mathcal{K}_1$ is a constant in $\mathbb{R}^{*+}$. The inequality (c) is follows from up by the Khintchine-Groshev theorem.
\paragraph{Case $II$ ($\frac{\widetilde{s}_2}{\widetilde{s}_1}\simeq 0$)} In this case, (\ref{eqn27}) becomes
\begin{equation}\label{eqn31}
\begin{aligned}
d^2_{min}&\simeq\operatornamewithlimits{min}_{\underset{(\widetilde{s}_1,\widetilde{s}_2)\neq (s_1,s_2)}{(\widetilde{s}_1,\widetilde{s}_2)\in\left\{\mathcal{M}_s^2\right\}}} \overline{d}^2_1\\
&=\operatornamewithlimits{min}_{\underset{(\widetilde{s}_1,\widetilde{s}_2)\neq (s_1,s_2)}{(\widetilde{s}_1,\widetilde{s}_2)\in\left\{\mathcal{M}_s^2\right\}}} h^2 A_s^2\left(\frac{s_1}{A_s}-\frac{s_2}{A_2}
+\beta_1(\frac{s_2}{A_s})\right)^2\\
&\geq\operatornamewithlimits{min}_{\underset{(\widetilde{s}_1,\widetilde{s}_2)\neq (s_1,s_2)}{(\widetilde{s}_1,\widetilde{s}_2)\in\left\{\mathcal{M}_s^2\right\}}} h^2 A_s^2\left(\frac{\mathcal{K}_2}{Q_s}\right)^2\\ &= \frac{h^2 A_s^2 \mathcal{K}_2^2}{Q_s^{2}},
\end{aligned}
\end{equation}
where $\mathcal{K}_2\in \mathbb{R}^{*+}$. We also use the Khintchine-Groshev theorem to provide the lower bound.
\paragraph{Case $III$ ($\frac{\widetilde{s}_2}{\widetilde{s}_2}=c:0<\lim_{Q_s \to \infty}|c|<1$)} From (\ref{eqn22}), we have
\begin{equation}\label{eqn32}
d^2_{min}=\operatornamewithlimits{min}_{\underset{(\widetilde{s}_1,\widetilde{s}_2)\neq (s_1,s_2)}{(\widetilde{s}_1,\widetilde{s}_2)\in\left\{\mathcal{M}_s^2\right\}}}\frac{|\langle\left(\overline{d}_1,\overline{d}_2\right),(h\widetilde{s}_1, h\widetilde{s}_2)\rangle|^2}{h^2\widetilde{s}_1^2+h^2\widetilde{s}_2^2}.
\end{equation}
Given that $\{\widetilde{s}_1, \widetilde{s}_2, d_1, d_2\}\neq {0,0,0,0}$, the minimum is achieved when the vector $(\overline{d}_1, \overline{d}_2)$ is as orthogonal as possible to $(h\widetilde{s}_1,h\widetilde{s}_2)$, i.e., $(\overline{d}_1, \overline{d}_2)$ is the closet possible to $(h\widetilde{s}_2,-h\widetilde{s}_1)$. From the Kintchine-Groshev theorem, there are two constants ${\mathcal{K}^{(1)}_3,\mathcal{K}^{(2)}_3}$ in $\left(\mathbb{R}^{*+}\right)^2$ such that
\begin{eqnarray}
|\overline{d}_1-h\widetilde{s}_2|=|h| A_s\left|\frac{s_1}{A_s}-\frac{\widetilde{s}_2}{A_s}
+\beta_1(\frac{s_2}{A_s})\right|>|h| A_s \frac{\mathcal{K}^{(1)}_3}{Q_s}\label{eqn33},\\
|\overline{d}_2-(-h\widetilde{s}_1)|=|h| A_s\left|\frac{\widetilde{s}_2}{A_s}+\frac{s_1}{A_s}
-\beta_1(\frac{s_1}{A_s})\right|>|h| A_s \frac{\mathcal{K}^{(2)}_3}{Q_s}\label{eqn34}.
\end{eqnarray}
Next, $\mathcal{K}_3$ refers to $\min(\mathcal{K}^{(1)}_3,\mathcal{K}^{(2)}_3)$. Subsisting (\ref{eqn33}) and (\ref{eqn34}) in (\ref{eqn27}) yields
\begin{equation}\label{eqn35}
\begin{aligned}
d_{min}^2 &=\operatornamewithlimits{min}_{\underset{(\widetilde{s}_1,\widetilde{s}_2)\neq (s_1,s_2)}{(\widetilde{s}_1,\widetilde{s}_2)\in\left\{\mathcal{M}_s^2\right\}}}\frac{\left(\overline{d}_1 \widetilde{s}_1+\overline{d}_2 \widetilde{s}_2\right)^2}{\widetilde{s}_1^2+\widetilde{s}_2^2}\\
&\geq\operatornamewithlimits{min}_{\underset{(\widetilde{s}_1,\widetilde{s}_2)\neq (s_1,s_2)}{(\widetilde{s}_1,\widetilde{s}_2)\in\left\{\mathcal{M}_s^2\right\}}}\frac{\left(\left(h\widetilde{s}_2\pm h A_s \frac{\mathcal{K}_3}{Q_s}\right) \widetilde{s}_1+\left(-h\widetilde{s}_1\pm h A_s \frac{\mathcal{K}_3}{Q_s}\right) \widetilde{s}_2\right)^2}{\widetilde{s}_1^2+\widetilde{s}_2^2}\\
&=\frac{h^2 A_s^2\mathcal{K}^2_3 }{Q_s^{2}}\operatornamewithlimits{min}_{\underset{(\widetilde{s}_1,\widetilde{s}_2)\neq (s_1,s_2)}{(\widetilde{s}_1,\widetilde{s}_2)\in\left\{\mathcal{M}_s^2\right\}}}\frac{\left(\widetilde{s}_1\pm \widetilde{s}_2\right)^2}{\widetilde{s}_1^2+\widetilde{s}_2^2}\\
&\geq\frac{h^2 A_s^2 \mathcal{K}_3^2}{Q_s^{2}}\operatornamewithlimits{min}_{\underset{0<\lim_{Q_s \to \infty}|c|<1}{c\in\mathbb{Q}}}\frac{\left(1\pm c\right)^2}{1+c^2}
\\&=\frac{h^2 A_s^2 \mathcal{K}_3^2}{Q_s^{2}}c^2_{min},
\end{aligned}
\end{equation}
 where $c_{min}$ is a constant independent of $Q_s$.
\par Considering the inequality (\ref{eqn29}), (\ref{eqn31}) and (\ref{eqn35}), we obtain a lower bound on $d_{min}^2$ as follows.
\begin{equation}\label{eq35}
d_{min}^2\geq \frac{h^2 A_s^2}{Q_s^{2}} \min\left(\frac{\mathcal{K}^2_1}{2^3},\,\mathcal{K}^2_1,\,\mathcal{K}_3^2c^2_{min} \right)= \frac{h^2 A_s^2}{Q_s^{2}}\mathcal{K}^2,
\end{equation}
where $\mathcal{K}=\min\left(\frac{\mathcal{K}^2_1}{2^3},\,\mathcal{K}^2_2,\,\mathcal{K}_3^2c^2_{min}\right)$. This proves Lemma \ref{LEM1}.
\bibliographystyle{IEEEtran}
\bibliography{IEEEabrv,interferencebib}
\end{document}